\newtheorem{theorem}{Theorem}[section]
\newtheorem{definition}[theorem]{Definition}
\newtheorem{proposition}[theorem]{Proposition}
\newtheorem{remark}[theorem]{Remark}
\newcommand{\ad}{\mathbf{a_2}}
\newcommand{\adc}{\mathbf{a_2^c}}
\newcommand{\adf}{\mathbf{a_2^f}}
\newcommand{\adgd}{\mathbf{a_2^{(2)}}}
\newcommand{\adgu}{\mathbf{a_2^{(1)}}}
\newcommand{\af}{\mathbf{a_5}}
\newcommand{\as}{\alpha}
\newcommand{\au}{a_1}
\newcommand{\bb}{\beta}
\newcommand{\bit}{\begin{itemize}}
\newcommand{\be}{\begin{equation}}
\newcommand{\bea}{\begin{equation}\begin{array}}
\newcommand{\beas}{\begin{equation*}\begin{array}}
\newcommand{\bef}{\begin{flalign}}
\newcommand{\befs}{\begin{flalign*}}
\newcommand{\bes}{\begin{equation*}}
\newcommand{\blms}{{\mathfrak B}_{LS}}
\newcommand{\cc}{\mathbf{C}}
\newcommand{\ddn}{\mathbf {d_N}}
\newcommand{\ddnd}{\mathbf {d_{N-2}}}
\newcommand{\ddnt}{\mathbf {d_{N-3}}}
\newcommand{\dq}{\mathbf{d_4}}
\newcommand{\dqn}{\mathbf{d_{4n}}}
\newcommand{\eab}{\varepsilon (\alpha,\beta)}
\newcommand{\ebg}{\varepsilon (\beta,\gamma)}
\newcommand{\ega}{\varepsilon (\gamma,\alpha)}
\newcommand{\egb}{\varepsilon (\gamma,\beta)}
\newcommand{\ee}{\end{equation}}
\newcommand{\eea}{\end{array}\end{equation}}
\newcommand{\eeas}{\end{array}\end{equation*}}
\newcommand{\eef}{\end{flalign}}
\newcommand{\eefs}{\end{flalign*}}
\newcommand{\ees}{\end{equation*}}
\newcommand{\eit}{\end{itemize}}
\newcommand{\eo}{\mathbf{e_8}}
\newcommand{\eon}{\mathbf{e_8^{(n)}}}
\newcommand{\ep}{\varepsilon}
\newcommand{\es}{\mathbf{e_6}}
\newcommand{\esn}{\mathbf{e_6^{(n)}}}
\newcommand{\est}{\mathbf{e_7}}
\newcommand{\estn}{\mathbf{e_7^{(n)}}}
\newcommand{\fff}{\mathfrak F}
\newcommand{\fq}{\mathbf{f_4}}
\newcommand{\fqn}{\mathbf{f_4^{(n)}}}
\newcommand{\gd}{\mathbf{g_2}}
\newcommand{\gdn}{\mathbf{g_2^{(n)}}}
\newcommand{\gh}{\gamma}
\newcommand{\gmf}{{\mathfrak g}}
\newcommand{\hri}{h_{\rho_i}}
\newcommand{\hrd}{h_{\rho_2}}
\newcommand{\hrt}{h_{\rho_3}}
\newcommand{\hru}{h_{\rho_1}}
\newcommand{\jdot}{\!\cdot\!}
\newcommand{\jotn}{\mathbf{J_3^{\pmb \nu}}}
\newcommand{\jobtn}{\mathbf{\overline J_3}^{\raisebox{-2 pt}{\scriptsize $\pmb \nu$}}}
\newcommand{\jotd}{\mathbf{J_3^2}}
\newcommand{\jobtd}{\mathbf{\overline J_3^{\raisebox{-2 pt}{\scriptsize \textbf 2}}}}
\newcommand{\joto}{\mathbf{J_3^8}}
\newcommand{\jobto}{\mathbf{\overline J_3^{\raisebox{-2 pt}{\scriptsize \textbf 8}}}}
\newcommand{\jp}{\circ}
\newcommand{\lk}{\mathfrak{L}}
\newcommand{\Ll}{\mathbb L}
\newcommand{\lms}{{\bf{\mathcal L_{MS}}}}
\newcommand{\lra}{\leftrightarrow}
\newcommand{\nbf}{{\pmb\nu}}
\newcommand{\nin}{\noindent}
\newcommand{\sref}[1]{{\bf\ref{#1}}}
\newcommand{\str}{\text{str}}
\newcommand{\tfo}{T_O}
\newcommand{\tfop}{T_O^\prime}
\newcommand{\tfs}{T_S}
\newcommand{\tfsm}{T_S^-}
\newcommand{\tfsp}{T_S^+}
\newcommand{\tfspm}{T_S^\pm}
\newcommand{\um}{{\scriptstyle \frac12}}
\newcommand{\xd}{x_{P2}}
\newcommand{\xdb}{\bar x_{P2}}
\newcommand{\xpi}{x_{Pi}}
\newcommand{\xpjb}{\bar x_{Pj}}
\newcommand{\xs}{x^\#}
\newcommand{\xt}{x_{P3}}
\newcommand{\xtb}{\bar x_{P3}}
\newcommand{\xu}{x_{P1}}
\newcommand{\xub}{\bar x_{P1}}
\newcommand{\zz}{\mathbb Z}
\numberwithin{equation}{section}
\begin{document}
%
%-----------------------------------------------------------------------------------------------------------------------------------------------------------
%---------------------------------------------------------------  TITLE
%-----------------------------------------------------------------------------------------------------------------------------------------------------------
\begin{titlepage}%1
\begin{center}
\hfill DFPD/2017/TH/14\\

%\vskip 5.5cm

{\huge{\bf The Magic Star of \\ Exceptional Periodicity\\}}
\vskip 0.5cm
\includegraphics[scale=0.52]{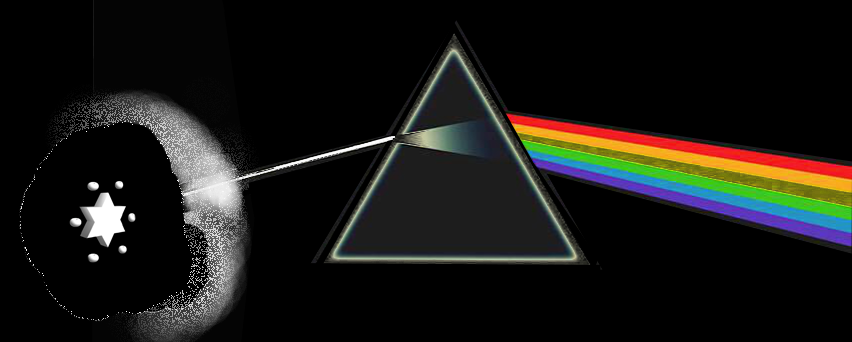}
\vskip 1.0cm

{\large{\bf Piero Truini\,$^1$, Michael Rios\,$^2$, Alessio Marrani\,$^{3,4}$}}

\vskip 40pt

{\it ${}^1$ Dipartimento di Fisica, Universit\` a di Genova,\\ and INFN, sez. di Genova,\\
via Dodecaneso 33, I-16146 Genova,  Italy} \vskip 5pt
\texttt{truini@ge.infn.it}

\vskip 10pt

{\it ${}^2$ Dyonica, ICMQG, USA} \vskip 5pt
\texttt{mrios@dyonicatech.com}

\vskip 10pt

{\em $^3$ Museo Storico della Fisica e Centro Studi e Ricerche ``Enrico Fermi'',\\
Via Panisperna 89A, I-00184, Roma, Italy \vskip 5pt }

\vskip 10pt

{\em $^4$ Dipartimento di Fisica e Astronomia ``Galileo Galilei'', Universit\`a di Padova,\\and INFN, Sez. di Padova,\\Via Marzolo 8, I-35131 Padova, Italy \vskip 5pt }

{\tt jazzphyzz@gmail.com} \\

    \vspace{10pt}

    %\vspace{10pt}

    %\vspace{10pt}

\end{center}

\vskip 0.6cm

\begin{center} {\bf ABSTRACT}\\[3ex]\end{center}
We present a periodic infinite chain of finite generalisations of the exceptional structures, including $\eo$, the exceptional Jordan algebra (and pair), and the octonions. We demonstrate that the exceptional Jordan algebra is part of an infinite family of finite-dimensional matrix algebras (corresponding to a particular class of cubic Vinberg's T-algebras). Correspondingly, we prove that $\eo$ is part of an infinite family of algebras (dubbed ``Magic Star'' algebras) that resemble lattice vertex algebras.

\vskip 1.0cm

\begin{center}
\textsl{\noindent {Seminar delivered by MR at the 4th Mile High Conference on Nonassociative Mathematics
University of Denver, Denver, Colorado, USA, July 29-August 5, 2017}}
\end{center}

\end{titlepage}

\newpage \setcounter{page}{1} \numberwithin{equation}{section}
%\pagestyle{plain}
%\tableofcontents

\newpage\tableofcontents
%\newpage

\section{Introduction}
{\it In the mind of every mathematician, there is a tension between the general rule and exceptional
cases. Our conscience tells us we should strive for general theorems, yet we are fascinated and seduced by beautiful exceptions.}

J. Stillwell \cite{Stillwell}\\

Since the inception of quantum mechanics operator algebras and their
symmetries have proved to be of central importance. Jordan, Wigner and von
Neumann \cite{JWVN}, in a rigorous exploration of extended forms of quantum
mechanics, classified finite dimensional self-adjoint operator algebras, now
known as formally real Jordan algebras. The exceptional case they found
contained the largest division algebra, the octonions $\mathbb{O}$ \cite%
{Octonions}, first discovered by J. T. Graves in 1843.

With Gell-Mann's Eightfold way \cite{GM}, and subsequent study of the $su_{3}
$ quark and gluon structures \cite{quark-1, quark-2}, Lie algebras
solidified a firm role in the study of elementary particles. After the
classifications by Killing and Cartan, all finite-dimensional complex simple
Lie algebras, along with all their non-compact real forms, are known.
Besides the infinite classical series $\mathbf{a}_{n}$, $\mathbf{b}_{n}$, $\mathbf{c}_{n}$, $\mathbf{d}_{n}$,
five \textit{exceptional} Lie algebras exist : $\mathbf{g}_{2}$, $\mathbf{f}_{4}$, $\mathbf{e}_{6}$, $%
\mathbf{e}_{7}$ and $\mathbf{e}_{8}$. Gursey, Ramond and others argued exceptional Lie
algebras played a role in unification of fundamental particles \cite{unif}.
With such studies the octonions made a return, as the algebra of their
derivations is the smallest exceptional Lie algebra $\mathbf{g}_{2}$. Moreover the
maximal self-adjoint operator algebra over the octonions, the exceptional
Jordan algebra $\mathbf{J}_{\mathbf{3}}^{\mathbb{O}}$, has the next largest exceptional Lie algebra $\mathbf{f}_{4}$
describing its derivations \cite{f4}.

Various non-compact real forms of exceptional Lie algebras play the role of
electric-magnetic duality ($U$-duality\footnote{%
Here $U$-duality is referred to as the \textquotedblleft
continuous\textquotedblright\ symmetries of \cite{CJ-1}. Their discrete
versions are the $U$-duality non-perturbative string theory symmetries
introduced in \cite{HT-1}.}) algebras in locally supersymmetric theories of
gravity\footnote{%
Some non-compact real forms of exceptional algebras also occur in absence of
local supersymmetry (\textit{cfr. }\cite{Magic-Non-Susy}, and Refs. therein).%
} (see \textit{e.g.} \cite{MESGT} for theories with $8$ supersymmetries).

The largest exceptional Lie algebra $\mathbf{e}_{8}$ became central to the heterotic
string \cite{Het} construction, that assigned $16$ of the $26$ dimensions of
the bosonic string to the $\mathbf{e}_{8}\oplus \mathbf{e}_{8}$ even self-dual lattice.
Later, as $D=11$ supergravity merged with $D=10$ string theory in Witten's
\cite{9} mysterious $M$-theory, Ramond \textit{et al.} noticed the $D=11$
supermultiplet had a hidden $\mathbf{f}_{4}$ symmetry \cite{10}, an observation which
was further investigated by Sati \cite{Sati-1, Sati-2}. Ramond even
speculated whether the exceptional Jordan algebra $\mathbf{J}_{\mathbf{3}}^{\mathbb{O}}$ might be a special charge
space related to the $D=11$ lightcone \cite{11}, as it has natural $\mathbf{so}(9)$
and $\mathbf{f}_{4}$ symmetry\footnote{%
Recently, the maximally non-compact (\textit{i.e.}, split) real form $%
\mathbf{f}_{4(4)}$ has been conjectured as the global symmetry of an exotic
ten-dimensional theory in the context of the study of \textquotedblleft
Magic Pyramids" \cite{ICL-Magic}. See Mike Duff's talk at this Conference.}. Schwarz and Kim then recast the BFSS
matrix model \cite{19} for $M$-theory in terms of octonion variables \cite%
{12}, and Smolin invoked the full exceptional Jordan algebra in a
Chern-Simons string matrix model \cite{13} for Horowitz and Susskind's
conjectured \textquotedblleft bosonic $M$-theory" in $D=27$ \cite{14}.

With breakthroughs in algebraic geometry, coming from Connes and others in
the investigation of noncommutative geometry \cite{15,16,17}, a program for
emergent spacetime is underway and goes beyond Riemannian geometry in
favor of recovering manifolds from operator algebras. This move to an
algebraic derived spacetime proved to be novel, as it circumvented the usual
problems with Lorentz symmetry via discretization by making geometry
inherently fuzzy \cite{18}. It was even found that extended objects in
string theory, $D$-branes, had natural coordinates that were noncommutative
\cite{19,20}. This made the role of $C^{\ast }$-algebras and $K$-homology of
central importance to the study of $D$-branes \cite{21, 21-bis}, while also
providing the spectral triples for traditional noncommutative geometry and
its applications in deriving the Standard Model of particle physics \cite%
{22,23}.

In 2007, Lisi proposed a unified model \cite{L} using $\mathbf{e}_{8}$, which was
later discovered to be troubled \cite{DG10}. However, Lisi's theory inspired
Truini to study a special star-like projection - named \textit{%
\textquotedblleft Magic Star" }(MS) - of $\mathbf{e}_{8}$ under $\mathbf{a}_{2}$ \cite{Truini}%
; also motivated by the search of a unified way to characterize the fourth
row of the Freudenthal-Rozen-Tits Magic Square \cite{Magic Square}, the
Magic Star made the structure of Jordan algebras of degree three manifest in
each exceptional Lie algebra \cite{Truini} (see also \cite{Marrani-Truini-1,
Marrani-Truini-2}). An interaction based model using this $\mathbf{e}_{8}$ projection
has been speculated in \cite{Truini}, and proposed in \cite{Marrani-Truini-Interactions}.\medskip

Our aim, in this contribution to the Proceedings of 4th Mile High Conference on Nonassociative Mathematics, as well as in subsequent forthcoming works, is to define and study a consistent \textit{generalization} of exceptional Lie
algebras, by crucially exploiting some remarkable properties of the
MS.

We will show how it is possible to go beyond the largest finite-dimensional
exceptional Lie algebra $\mathbf{e}_{8}$ by relying on the basic structures
underlying the MS. This will result in the formulation of the so-called
\textit{"Exceptional Periodicity" }(EP), namely a generalization of
exceptional Lie algebras which is parametrized by a natural number $n\in
\mathbb{N}$, and which also enjoys a periodicity (ultimately traced back to
the well known Bott periodicity). At each \textquotedblleft level" of EP,
namely for each fixed value of $n$, the dimension of the resulting algebra
will be \textit{finite}, raising however the intriguing question (left for future studies) of investigating its $n\rightarrow \infty $ limit.
The impetus for the EP generalization came from certain $3$- and $5$-
gradings of the exceptional Lie algebras, along with spinor structures%
\footnote{%
Discussion with Eric Weinstein, during the \textquotedblleft Advances in
Quantum Gravity" symposium, San Francisco, July 2016.} beyond $\mathbf{e}_{8}$%
.

We will prove the existence of a \textit{periodic infinite chain of finite
generalisations of the exceptional structures}, including $\mathbf{e}_{8}$, the
Exceptional Jordan (or Albert) Algebra $\mathbf{J}_{\mathbf{3}}^{\mathbb{O}}$,
and the octonions $\mathbb{O}$ themselves. Remarkably, for $n=1$, the
MS-shaped structure of known finite-dimensional exceptional Lie algebras is
recovered.

As it will be evident from the subsequent treatment, the price to be paid
for such an elegant and periodic, $n$-parametrized and MS-shaped,
generalization of exceptional algebras is that \textit{the resulting
algebras will no longer be of Lie type}, namely they will not satisfy Jacobi
identities anymore. Indeed, within EP we will not be dealing with root
lattices, but rather with \textit{\textquotedblleft extended"} root
lattices, which will be thoroughly defined further below.

It is here worth remarking that EP provides a way to go beyond $\mathbf{e}_{8}$ which
is radically different from the way provided by affine and (extended)
Kac-Moody generalizations, such as\footnote{%
For recent development on $\mathbf{e}_{11}$ and beyond, see \cite{beyond e11}.} $%
\mathbf{e}_{8}^{+}=:\mathbf{e}_{9}$, $\mathbf{e}_{8}^{++}=:\mathbf{e}_{10}$, $e_{8}^{+++}=:\mathbf{e}_{11}$, which also
appeared as symmetries for (super)gravity models reduced to $D=2,1,0$
dimensions \cite{extended-Refs, West}, respectively, as well as near
spacelike singularities in supergravity \cite{spacelike-Refs}. In fact,
while such extensions of $\mathbf{e}_{8}$ are still of Lie type but \textit{%
infinite-dimensional}, the generalization of exceptional Lie algebras
provided by EP is \textit{not of Lie type}, but nevertheless is \textit{%
finite-dimensional}, for each level of the EP itself.

\bigskip

The paper is organized as follows.

In section \sref{sec:jp} we recall the notions of Jordan Algebras and Pairs with particular emphasis on their relationship with all the exceptional Lie algebras, except $\eo$. In section \sref{sec:magic} we introduce the Magic Star and show its main features, among which the appearance of a triple of Jordan Pairs at the core of $\eo$. This is, in our opinion,  the best way of expressing the link between $\eo$ and Jordan algebras.
In sections \sref{sec:ep}, \sref{sec:lms} and \sref{sec:ta} we extend the features of the Magic Star, in particular its relationship with the exceptional Lie algebras, to an infinite chain of finite dimensional algebras called {\it Exceptional Periodicity} (EP). In the case of the extension $\eon$ of $\eo$, the algebra has rank $N=4(n+1)$, $n=1,2,...$. In section \sref{sec:ep} we introduce a system of {\it extended} roots for EP, depending on $n$, which generalizes the root system of the exceptional Lie algebras. In section \sref{sec:lms} we associate an algebra, for each $n$, to the extended root system. It appears as a finite dimensional generalization of the exceptional Lie algebras, in which the Jacobi identity is only partially true. In section \sref{sec:ta} we show that in the Magic Star for EP the role of the Jordan algebras is played by matrix algebras, first introduced by Vinberg \cite{vin}, called T-algebras. All in all, EP not only generalizes exceptional Lie algebras, but also the exceptional Jordan algebra via its T-algebra \cite{vin} structure.  And, moreover, this structure is made manifest in the MS projections of the extended root systems. In section \sref{sec:fd} we list some future developments on the study of EP and its applications to a model for Quantum Gravity. Most of these developments are at an advanced stage at the time we are completing the present paper.\\

%----------------------------------------------------------------------------------------------------------------------------------------------------
%---------------------------------------------------------- Jordan Structures and related Lie algebras ----------------------------------------------------------------------
%----------------------------------------------------------------------------------------------------------------------------------------------------
\section{Jordan Structures and related Lie algebras}\label{sec:jp}
{\it There are no Jordan algebras, there are only Lie algebras.}

I. L. Kantor\\

The modern formulation of Jordan algebras, \cite{jacob1}, involves a quadratic map $U_x y$ (like $xyx$ for associative algebras) instead of the original symmetric product $x \jdot y = \frac12(xy + yx)$, \cite{JWVN}. The quadratic map and its linearization $V_{x,y} z = (U_{x+z} - U_x - U_z)y$ (like $xyz+zyx$ in the associative case) reveal  the mathematical structure of Jordan Algebras much more clearly, through the notion of inverse, inner ideal, generic norm, \textit{etc}. The quadratic formulation is also particularly suited for the connection with Lie algebras, as we will explain in the next section. The axioms for (quadratic) Jordan algebras are:
\begin{equation}
U_1 = Id \quad , \qquad
U_x V_{y,x} = V_{x,y} U_x \quad  , \qquad
U_{U_x y} = U_x U_y U_x
\label{qja}
\end{equation}
The quadratic formulation led to the concept of Jordan Triple systems \cite{myb}, an example of which is a pair of modules represented by rectangular matrices. There is no way of multiplying two matrices $x$ and $y$ , say $n\times m$ and $m\times n$ respectively, by means of a bilinear product. But one can do it using a product like $xyx$, quadratic in $x$ and linear in $y$. Notice that, like in the case of rectangular matrices, there needs not be a unity in these structures. The axioms are in this case:
\begin{equation}
U_x V_{y,x} = V_{x,y} U_x \quad  , \qquad
V_{U_x y , y} = V_{x , U_y x} \quad , \qquad
U_{U_x y} = U_x U_y U_x
\label{jts}
\end{equation}

Finally, a Jordan Pair is defined just as a pair of modules $(V^+, V^-)$ acting on each other (but not on themselves) like a Jordan Triple:
\begin{equation}\begin{array}{ll}
U_{x^\sigma} V_{y^{-\sigma},x^\sigma} &= V_{x^\sigma,y^{-\sigma}} U_{x^\sigma}
\\
V_{U_{x^\sigma} y^{-\sigma} , y^{-\sigma}} &= V_{x ^\sigma, U_{y^{-\sigma}} x^\sigma} \\
U_{U_{x^\sigma} y^{-\sigma}} &= U_{x^\sigma} U_{y^{-\sigma}} U_{x^\sigma}\end{array}
\label{jp}
\end{equation}
where $\sigma = \pm$ and $x^\sigma \in V^{+\sigma} \, , \; y^{-\sigma} \in V^{-\sigma}$.

Jordan Pairs are strongly related to the Tits-Kantor-Koecher construction of Lie Algebras $\lk$ \cite{tits1}-\nocite{kantor1}\cite{koecher1} (see also the interesting relation to Hopf algebras, \cite{faulk}):
\begin{equation}
\lk = J \oplus \str(J) \oplus \bar{J} \label{tkk}
\end{equation}
where $J$ is a Jordan algebra and $\str(J)= L(J) \oplus Der(J)$ is the structure algebra of $J$ \cite{mac1}; $L(x)$ is the left multiplication in $J$: $L(x) y = x \jdot y$ and $Der(J) = [L(J), L(J)]$ is the algebra of derivations of $J$ (the algebra of the automorphism group of $J$) \cite{schafer1}\cite{schafer2}.

 In the case of complex exceptional Lie algebras, this construction applies to $\est$, with $J = \joto\equiv \mathbf{J}_{\mathbf{3}}^{\mathbb{O}}$, the 27-dimensional exceptional Jordan algebra of $3 \times 3$ Hermitian matrices over the complex octonions, and $\str(J) = \es \otimes \mathbb{C}$ - $\mathbb{C}$ denoting the complex field. The algebra $\es$ is called the \emph{reduced structure algebra} of $J$, $\str_0(J)$, namely the structure algebra with the generator corresponding to the multiplication by a complex number taken away: $\es = L(J_0) \oplus Der(J)$, with $J_0$ denoting the traceless elements of $J$.

Let us focus on exceptional Jordan and Lie algebras. We consider the Jordan algebras and Pairs based on the $3\times 3$ matrices $\jotn, \pmb\nu= 1,2,4,8$:
\be
\jotn := \left( \begin{array}{ccc} \alpha & a & \bar b \\ \bar a & \beta & c \\ b & \bar c & \gamma \end{array} \right) \
\begin{array}{l}  \alpha \, , \beta \, , \gamma \in \mathbb{C} \; ; \, a\, , b \, , c \in \mathbb{A}\\  {\pmb\nu}=1,2,4,8 \text{ for } \mathbb{A} = \mathbb{R} , \, \mathbb{C}  ,\, \mathbb{H}  ,\, \mathbb{O} \end{array}
\label{jm}
\ee
where $\mathbb{R} , \, \mathbb{C}  ,\, \mathbb{H}  ,\, \mathbb{O}$ denote the Hurwitz algebras of reals, complex, quaternions and octonions, respectively.\\
If $x,y \in \jotn$ and $xy$ denotes their standard matrix product, we denote by $x\jdot y := \frac12 (xy + yx)$ the Jordan product of $x$ and $y$. The Jordan identity is the power associativity with respect to this product:
\be\label{pass}
x^2 \jdot (x\jdot z) - x \jdot (x^2 \jdot z) = 0,
\ee

Another  fundamental product is the {\it sharp} product $\#$, \cite{mac1}. It is the linearization of $\xs := x^2 - t(x) x - \frac12(t(x^2) - t(x)^2)I$, with $t(x)$ denoting the trace of $x\in \jotn$, in terms of which we may write the fundamental cubic identity for $\jotn, \pmb\nu= 1,2,4,8$:
\be\label{cubic}
\xs\jdot  x = \frac13 t(\xs\!, x) I \quad \text{or} \quad x^3 - t(x) x^2 + t(\xs) x - \frac13 t(\xs\! , x) I = 0
\ee
where we use the notation $t(x,y) := t(x\jdot y)$ and  $x^3 = x^2 \jdot x$ (notice that for $\joto$, because of non-associativity, $x^2 x \ne x x^2$ in general).

The triple product is defined as, \cite{mac1}:
\bea{ll}\label{vid}
\{ x , y , z \} := V_{x,y}z :&= t(x,y) z + t(z,y) x - (x \# z) \# y \\
&= 2 \left[ (x \jdot y)\jdot z +  (y \jdot z)\jdot x - (z \jdot x)\jdot y \right]
\eea

Notice that the last equality of \eqref{vid} is not trivial at all. $V_{x,y}z$ is the linearization of the quadratic map $U_xy$. The equation (2.3.15) at page 484 of \cite{mac1} shows that:
\be\label{uid}
U_x y = t(x,y) x  - x ^\# \# y = 2 (x \jdot y)\jdot x - x^2 \jdot y
\ee

The following identities can be derived from the Jordan Pair axioms, \cite{loos1}:
\be
\left[ V_{x,y} , V_{z,w} \right] = V_{{V_{x,y} z},w} - V_{z,{V_{x,y} w}}
\label{comv}
\ee
and, for $D = (D_+,D_-)$ a derivation of the Jordan Pair $V$ and $\beta(x,y) = (V_{x,y}, - V_{y,x})$,
\be
[D, \beta(x,y)] = \beta (D_+(x),y) + \beta(x, D_-(y))
\label{dib}
\ee

We have:
\nin For $J_P:=(J,\bar J)$: $str(J) = Der(J_P)$
$$U_{x^+}y^- = [[x^+,y^-],x^+]\quad ; \quad U_{x^-}y^+ = [[x^-,y^+],x^-]$$
JP axioms follow from the Jacobi identity of the Lie algebra $str(J)$.\\

To summarize the relationship between the exceptional Jordan algebra and the exceptional Lie algebras is:\\

\nin $\gd = Der(\mathbb{O})$\\

\nin$\fq = Der(\joto)$ ; $Der(J) = [L(J), L(J)]$\\

\nin$\es =str_o(\joto) = L(J_0) \oplus Der(J) =Der_o(\joto,\jobto)$\\

\nin$\est = \joto \oplus Der(\joto,\jobto) \oplus \jobto$ \ \text{superstructure algebra}\\

\nin $\eo = $ {\bf ?}

The link between $\eo$ and Jordan Algebras and Pairs is expressed, in our opinion, by the Magic Star, \cite{Truini}.

%-----------------------------------------------------------------------------------------------------------------------------------------------------------

\section{The Magic Star: Exceptional Lie algebras and Jordan structures}\label{sec:magic}
{\it The star that can bring peace and order back to the chaotic world.}

citation from the ``Magic Star'' TV series, China, 2017.\\

\begin{figure}[h]\centering
\includegraphics[scale=1]{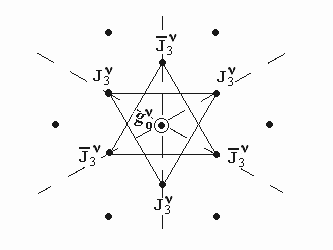}
\caption{Root diagram of $\fq$, $\es$, $\est$, $\eo$ (${\pmb \nu}=1,2,4,8$) projected on the plane of one $\ad$}\label{fig:Fig1}
\end{figure}

The Magic Star, \cite{Truini} \cite{Marrani-Truini-1}, shown in Figure \ref{fig:Fig1}, is the projection of the roots of the
exceptional Lie algebras on a complex $\mathbf{su(3)}=\ad$ plane,
recognizable by the dots forming the external hexagon, and it exhibits the
\textit{Jordan Pair} content of each exceptional Lie algebra. There are
three Jordan Pairs $(\jotn,\jobtn)$, each of which lies on an axis
symmetrically with respect to the center of the diagram. Each pair doubles
a simple Jordan algebra of rank $3$, $\jotn$, with involution - the
conjugate representation $\jobtn$, which is the algebra of $3\times 3$
Hermitian matrices over $\mathbb{A}$, where $\mathbb{A}=\mathbb{R},\,\mathbb{C}  ,\,\mathbb{H} ,\,\mathbb{O}$ for $\pmb\nu=1,2,4,8$ respectively.
The exceptional Lie algebras $\fq$, $\es$, $\est$, $\eo$ are obtained for $\nbf%
=1,2,4,8$, respectively. $\gd$ can be also represented in the same way, with
the Jordan algebra reduced to a single element. The Jordan algebras $\jotn$ (and their conjugate $\jobtn$) globally
behave like a $\mathbf{3}$ (and a $\mathbf{\overline{3}}$) dimensional
representation of the outer $\mathbf{a}_{2}$. The algebra denoted by $\mathbf{%
g_{0}^\nbf}$ in the center (plus the Cartan generator associated with the
axis along which the pair lies) is the algebra of the
automorphism group of the Jordan Pair; namely, $\mathbf{g_{0}^\nbf}$ is the
the \textit{reduced} structure group of the corresponding
Jordan algebra $\jotn$: $\mathbf{g_{0}^\nbf=str}%
_0\left( \jotn \right)$.

The \textit{base field} considered throughout the present paper is $\mathbb{C}$.
The various real compact and non-compact forms
of the exceptional Lie algebras follow as a consequence, using some more or
less laborious tricks, whose treatment we leave to a future study, and they
do not affect the essential structure.\\

What is magic in the Magic Star? Let us list a number of features which are clearly shown by the picture.\\

\bit
\item[1)] The Magic Star represents a unifying view of all the exceptional algebras.

\item[2)] Since $Der_o(\jotd,\jobtd )=\ad\oplus \ad$ and
$Der_o(\joto,\jobto )=\es$, Figure \ref{fig:Fig1} depicts the following decomposition, \cite{Truini} :%
\be\begin{array}{ll}\label{eodec}
\eo &= \adc+(3,\joto) + (\bar 3, \jobto) + Der_o(\joto,\jobto)\\
&= \adc+(3,\joto) + (\bar 3, \jobto) + \adf+(3,\jotd) + (\bar 3, \jobtd) + Der_o(\jotd,\jobtd)\\
&= \adc+(3,\joto) + (\bar 3, \jobto) + \adf+(3,\jotd) + (\bar 3, \jobtd) + \adgu + \adgd
\end{array}\ee
which shows that $\eo$ is made of 4 orthogonal $\ad$ subalgebras, plus 3 Jordan Pairs $(\joto,\jobto)$ over the octonions, plus 3 Jordan Pairs $(\jotd,\jobtd)$ over the complex. The role of the $\ad$'s is interchangeable: any $\ad$ would produce a Magic Star. Select one and call it $\adc$ for {\it color $\ad$} and another one, $\adf$, inside $\es$ as  {\it flavor $\ad$}, then the Jordan Pairs in \ref{eodec} represent three families of colored and three families of flavor degrees of freedom (notice that the flavor part is a singlet of $\adc$).The two remaining $\ad$'s may be related to gravity. We are currently working on this aspect with the hope to produce an $\eo$ model of quantum gravity with an emergent discrete spacetime.

\item[3)] By considering a Jordan Pair $(\joto,\jobto)$ plus $\es\oplus \mathbb{C}$ we get the 3-graded Lie algebra $\est$, clearly shown in Figure \ref{fig:est}.
\begin{figure}[h!]\centering
\includegraphics[scale=1]{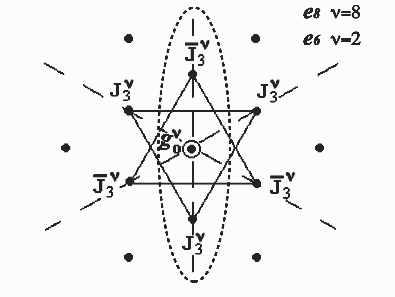}
\caption{$\est$ (resp. $\af$) as 3-graded inside $\eo$ (resp. $\es$)}\label{fig:est}
\end{figure}
This shows why quadratic maps and Jordan Pairs are the natural Jordan product and structure inside $\eo$: $\ U_{x^\sigma}y^{-\sigma} = [[x^\sigma,y^{-\sigma}],x^\sigma] \ , \ \sigma=\pm$\\
whereas $[x^\sigma,y^{\sigma}]=0$ as clearly shown by the 3-grading.

\item[4)] Other important algebraic structures easily visible in the Magic Star are related to the 5-grading of $\eo$: the Freudenthal Triple Systems (FTS), \cite{hel}, in the case of $\eo$ an irreducible 56-dimensional irreducible representation of $\est$, and the Kantor Pair, \cite{kp}, formed by two FTS'. They are shown in Figure \ref{fig:ftskp}.
\begin{figure}[htbp]
\includegraphics[width=150mm,keepaspectratio]{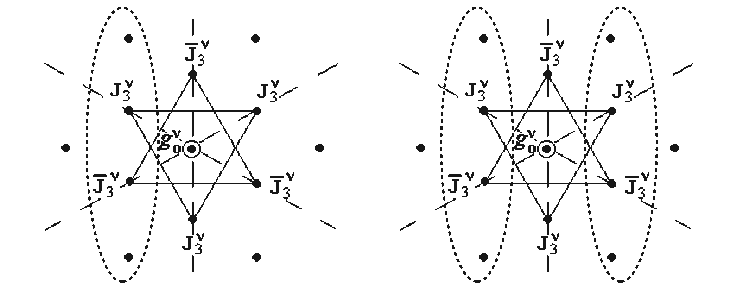}
\caption{FTS (left) and KP (right) within $\eo$}\label{fig:ftskp}
\end{figure}
Denoting $\gmf = \gmf_{-2} + \gmf_{-1} + \gmf_0 + \gmf_1 + \gmf_2$ the 5-grading, we have $\est$ in grade 0  = $Der(FTS)$, an FTS in grade 1 (a 56-dim irrep of $\est$) and 1-dimensional $\gmf_{\pm 2} := \cc x_{\pm \rho}$. The non-deg skew bilinear $<\ , \ >$ and a fully symmetric quartic form $q$ on $\gmf_1$, characteristic of the FTS, are obtained by:
\bea{rcll}
[ x, y] &=& <x,y> x_\rho & x,y\in \gmf_1\\
q(x,x,x,x) &=& [x , [x , [x , [x , x_{-\rho}]]]] &  x\in \gmf_1
\eea
from which $\Rightarrow q(x_{\beta_1}, x_{\beta_2} , x_{\beta_3} , x_{\beta_4})$ follows by linearization.\\

The Kantor Pair is a pair of modules and within $\eo$ is $$KP = \eo \ominus (\est\oplus \mathbf{a}_{1})$$
\eit

%===

\section{Exceptional Periodicity and Extended Roots}\label{sec:ep}
{\it Divinity drew Earth out of Emptiness as he drew One from Nothing to create many.}

Pythagoras\\

We force the definition of root system to include what we call {\it extended} roots not obeying the symmetry by Weyl reflection, nor the fact that $2\dfrac{(\alpha,\beta)}{(\alpha,\alpha)}$ be integer for all roots $\alpha$, $\beta$.\\

Let $\{k_1 , ... , k_N\}$ be an orthonormal basis of an Euclidean space $V$ of dimension $N$. For any $n=1,2,...$ we introduce $N=4(n+1)$ and define the extended roots of $\eon$ as:

\nin
{\Large $\eon$}:
\bea{lllcl}
\pm k_i\pm k_j & 1\le i<j\le N && 2N(N-1) &\text{roots}\\
\frac12 (\pm k_1 \pm k_2 \pm ... \pm k_N) &\text{even \# of +} && 2^{N-1} &\text{roots}
\eea

This is a root system only in the case $n=1$ and $\eo^{(1)}=\eo$.\\

The {\it extended} roots of $\gdn$ are obtained by projecting on the space spanned by $k_1-k_2$ and $k_1+k_2-2k_3$, hence $\gdn=\gd$; those of $\fqn$ are the projection on the space  spanned by $k_1,k_2,...,k_{N-4}$:

\nin
{\Large $\fqn$}:
\bea{ll}
\pm k_i \ , \ \pm k_i\pm k_j & 1\le i<j\le N-4\\
\frac12 (\pm k_1 \pm k_2 \pm ... \pm k_{N-4}) &
\eea

and finally:\\

\nin
{\Large $\esn$}:
\bea{ll}
\pm k_i\pm k_j & 1\le i<j\le N-3\\
\frac12 (\pm k_1 \pm k_2 \pm ... \pm (k_{N-2}+k_{N-1}+k_N)) &\text{even \# of +}
\eea

\nin
{\Large $\estn$}:
\bea{ll}
\pm(k_{N-1}+k_N)\\
\pm k_i\pm k_j & 1\le i<j\le N-2\\
\frac12 (\pm k_1 \pm k_2 \pm ... \pm (k_{N-1}+k_N)) &\text{even \# of +}
\eea

\begin{figure}[htbp]\centering
\includegraphics[width=80mm]{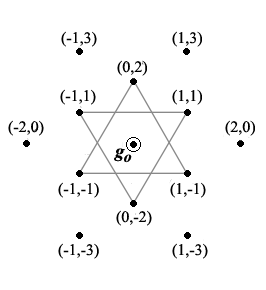}
	\caption{The Magic Star}\label{fig:ms}
\end{figure}

All these sets of {\it extended} roots form a Magic star of figure \ref{fig:ms}, once projected on the plane spanned by $k_1-k_2$ and $k_1+k_2-2k_3$. In the figure the pair of integers $(a,b)$ are the (Euclidean) scalar products of each root with $k_1-k_2$ and $k_1+k_2-2k_3$.\\

One can readily check that, upon a relabeling of the $k$'s, $\esn$ is the center of the Magic Star of $\eon$ and that $\estn=\esn\oplus T_{(a,b)}\oplus T_{(-a,-b)}$, for a fixed pair $(a,b)\in \{(1,1),(-1,1),(0,-2)\}$, where $\esn$ is the center of the Magic Star and $T_{(a,b)}$ is the set of roots of $\eon$ whose scalar products with $k_1-k_2$ and $k_1+k_2-2k_3$ are $(a,b)$.\\

The {\it rank} of $\gdn$, $\fqn$, $\esn$, $\estn$, $\eon$ is the dimension of the vector space $V$ spanned by their roots, namely $2$, $N-4$, $N-2$,  $N-1$, $N$ respectively.\\

By abuse of definition we shall often say {\it root}, for short, instead of
{\it extended root}.\\

From now on we restrict to $\esn$, $\estn$, $\eon$ and denote by $\lms$ anyone of them, by $\Phi$ the set of extended roots of $\lms$ and by $R$ its rank. We recall that $N=4(n+1)$, $n=1,2,...$, hence $R=N-2=4n+2$ for $\esn$, $R=N-1=4n+3$ for $\estn$, $R=N=4n+4$ for $\eon$.\\

We denote by $\Phi_O$ and $\Phi_S$ the following subsets of $\Phi$:
\bea{l}
\Phi_O = \{ (\pm k_i \pm  k_j) \in \Phi \}
\\ \\
\Phi_S =  \{ \frac{1}{2} (\pm k_1 \pm k_2 \pm ... \pm k_N)  \in \Phi \}
\eea

\begin{remark}\label{r:dd} Notice that $\Phi_O$ is the root system of $\ddnt$ in the case of $\esn$, of $\ddnd\oplus \au$ in the case of $\estn$ and of $\ddn$ in the case of $\eon$.
\end{remark}

We now prove the following results.

\begin{proposition} For all $\rho\in \Phi_O$ and $x\in \Phi$:
 $2\dfrac{(x, \rho)}{(\rho,\rho)}\in \zz$ and $w_\rho(x) = x - 2\dfrac{(x, \rho)}{(\rho,\rho)}\rho\in \Phi$ (the set of extended roots is closed under the Weyl reflections by all $\rho\in \Phi_O$).  The set of extended roots is closed under the Weyl reflections by all $\rho\in \Phi$ if and only if $n=1$.
\end{proposition}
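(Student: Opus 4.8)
The plan is to verify the two integrality/closure claims directly from the explicit description of $\Phi = \Phi_O \cup \Phi_S$ and the fixed inner product coming from the orthonormal basis $\{k_1,\dots,k_N\}$, and then to exhibit an explicit failure of Weyl-closure for a root in $\Phi_S$ when $n\ge 2$.

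\medskip

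First I would dispose of the integrality. Fix $\rho\in\Phi_O$, so $\rho = \epsilon k_i + \epsilon' k_j$ with $\epsilon,\epsilon'\in\{\pm1\}$ and $i<j$; then $(\rho,\rho)=2$. For $x\in\Phi$ there are two cases. If $x\in\Phi_O$, then $(x,\rho)$ is an integer (a sum of terms $\pm1$), so $2(x,\rho)/(\rho,\rho)=(x,\rho)\in\zz$. If $x\in\Phi_S$, write $x=\frac12(\sum_{m}\eta_m k_m)$ with $\eta_m\in\{\pm1\}$; then $(x,\rho)=\frac12(\epsilon\eta_i+\epsilon'\eta_j)\in\{-1,0,1\}$, so again $2(x,\rho)/(\rho,\rho)=(x,\rho)\in\zz$. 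This settles $2(x,\rho)/(\rho,\rho)\in\zz$ for all $x\in\Phi$, $\rho\in\Phi_O$.

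\medskip

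Next, closure under $w_\rho$ for $\rho\in\Phi_O$. Here $w_\rho(x)=x-(x,\rho)\rho$ with $(x,\rho)\in\{-2,-1,0,1,2\}$ in the $\Phi_O$ case and $\{-1,0,1\}$ in the $\Phi_S$ case. The cleanest way is to recall Remark \ref{r:dd}: $\Phi_O$ is the root system of $\ddn$ (resp.\ $\ddnd\oplus\au$, resp.\ $\ddnt$), hence $\Phi_O$ is closed under $w_\rho$ for $\rho\in\Phi_O$ by the classical theory of the $\mathbf{d}$-series root system. It remains to check that $w_\rho$ maps $\Phi_S$ into $\Phi_S$. For $x\in\Phi_S$ and $\rho=\epsilon k_i+\epsilon' k_j$, the reflection only alters the $i$-th and $j$-th coordinates of $x$: a short check of the three subcases $(x,\rho)\in\{-1,0,1\}$ shows $w_\rho(x)$ again has all coordinates equal to $\pm\frac12$, it flips an even number of signs relative to $x$ (either none, or exactly the two at positions $i,j$), hence preserves the ``even number of $+$'' (or the appropriate $\esn$/$\estn$ variant thereof, in which $k_{N-1},k_N$, etc., are tied together — one notes $w_\rho$ respects these tyings because reflecting by $\rho\in\Phi_O$ of the relevant block permutes/flips the tied coordinates coherently). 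Therefore $w_\rho(x)\in\Phi_S\subset\Phi$.

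\medskip

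Finally, the ``if and only if'' for full Weyl-closure. For $n=1$ we have $\Phi$ is literally the root system of $\eo$ (resp.\ $\esz$, $\estz$), which is closed under all its reflections — nothing to prove. For the converse, suppose $n\ge2$, so $N=4(n+1)\ge12$. I would produce an explicit pair $x,\rho\in\Phi_S$ with $w_\rho(x)\notin\Phi$, or else a pair where $2(x,\rho)/(\rho,\rho)\notin\zz$. Take for instance $\rho=\frac12(k_1+\cdots+k_N)$ (with an even number of $+$, which holds since $N$ is even) and $x=\frac12(k_1+k_2+k_3+k_4-k_5-\cdots-k_N)$, chosen in $\Phi_S$; then $(\rho,\rho)=N/4=n+1$ and $(x,\rho)=\frac14(4-(N-4))=\frac{8-N}{4}=2-n$, so $2(x,\rho)/(\rho,\rho)=\frac{2(2-n)}{n+1}$, which is \emph{not} an integer for $n\ge2$ (e.g.\ $n=2$ gives $0$ — bad example). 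I would instead pick $x$ so that $(x,\rho)$ is coprime-ish to $n+1$; concretely, choosing $x$ to agree with $\rho$ in all but two coordinates gives $(x,\rho)=(n+1)-1=n$, so $2n/(n+1)\notin\zz$ for $n\ge2$, which already violates the integrality condition and hence certainly the closure condition. The \textbf{main obstacle} is precisely this last step: one must choose the witnessing root $x\in\Phi_S$ carefully so that the parity constraint (even number of $+$) is satisfied simultaneously with $(x,\rho)$ failing to be a multiple of $(\rho,\rho)/2$; after that, verifying $w_\rho(x)\notin\Phi$ (its coordinates are no longer all $\pm\frac12$ nor a $\pm k_i\pm k_j$) is a routine computation. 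The only subtlety in the forward direction is making sure the ``even \# of $+$'' bookkeeping, together with the coordinate-tyings in the $\esn$ and $\estn$ cases, is genuinely preserved by all the $\Phi_O$-reflections — which it is, since each such reflection acts as a signed coordinate transformation fixing those tyings.
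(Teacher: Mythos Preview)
Your proposal is correct and follows essentially the same route as the paper: direct case analysis on $(x,\rho)$ for closure under $\Phi_O$-reflections, and an explicit $\Phi_S\times\Phi_S$ witness for the failure when $n\ge 2$ (you take $(x,\rho)=n$ where the paper takes $(x,\rho)=-n$, a symmetric choice). The one minor difference is that where the paper handles the subcase $(x,\rho)=\pm1$ by a forward reference to Proposition~\ref{sproots}, you verify the two-sign flip on $\Phi_S$ by direct coordinate inspection, which is arguably more self-contained.
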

\begin{proof} If $\rho\in \Phi_O$ then $(\rho,\rho)=2$ and $(x,\rho)\in \{0,\pm1,\pm 2\}$, hence $2\dfrac{(x, \rho)}{(\rho,\rho)}\in \zz$ and $w_\rho(x) = x - (x, \rho) \rho$. If $(x,\rho)=0$ then $w_\rho(x) = x\in \Phi$. If $(x,\rho)=\pm 1$ then $w_\rho(x) = x\mp \rho \in \Phi$ as we shall prove in proposition \ref{sproots}. If $(x,\rho)=\pm 2$ then $\rho=\pm x$ and $w_\rho(x) = - x\in \Phi$. Suppose now that both $x,\rho\in \Phi_S$ and write $x= \frac12 \sum \lambda_i k_i$, $\rho= \frac12 \sum \mu_i k_i$ where $\lambda_i,\mu_i\in \{-1,1\}$. We can certainly pick an $x$ such that $(x,\rho)=-n$ which occurs whenever for two  indices $j,\ell$ $\lambda_j=\mu_j$ and $\lambda_\ell=\mu_\ell$ while $\lambda_i = - \mu_i$ for $i\ne j,\ell$. Hence $w_\rho(x) = x + 2\dfrac{n}{n+1}\rho = \frac12\sum \nu_i k_i$. We have that $|\nu_j|= \left| \lambda_j + 2\dfrac{n}{n+1} \mu_j \right|= \dfrac{3n+1}{n+1}\ge 2$ and $|\nu_j| = 2$ if and only if $n=1$, in which case $\Phi$ is the root system of a simple Lie algebra. For $n>1$ there is no root with such a coefficient $\nu_j$.\hfill
\end{proof}

We introduce the basis $\Delta = \{\alpha_1 , ... ,\alpha_R\}$ of $\Phi$, with $\alpha_i = k_i-k_{i+1}\, , \ 1\le i \le R-2$, $\alpha_{R-1} = k_{R-2}+k_{R-1}$ and $\alpha_R = -\frac12(k_1+k_2+...+k_N)$; we order them by setting $\alpha_i > \alpha_{i+1}$:
\be\label{sroots}
\Delta = \{k_1-k_2, k_2-k_3, ..., k_{R-2} - k_{R-1}, k_{R-2} + k_{R-1}, -\um(k_1+k_2+...+k_N)\}
\ee

\begin{proposition}\label{approp} The set $\Delta$ in \eqref{sroots} is a set of {\bf simple extended roots}, by which we mean:

\bit
\item[i)] $\Delta$ is a basis of the Euclidean space $V$ of finite dimension $R$;
\item[ii)] every root $\beta$ can be written as a linear combination of roots of $\Delta$ with all positive or all negative integer coefficients: $\beta = \sum \ell_i \alpha_i$ with $\ell_i \ge 0$ or $\ell_i\le 0$ for all $i$.
\eit
For all roots $\beta = \sum_{i = 1}^R {m_i\alpha_i}$ the coefficient $m_R$ is such that:
\bea{ll}
m_R \in \{0,\pm 2\} & \text{if } \beta \in \Phi_O\\
m_R =\pm 1 & \text{if } \beta \in \Phi_S\\ \label{rema}
\eea
\end{proposition}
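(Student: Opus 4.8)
The plan is to verify the three requirements (i), (ii) and \eqref{rema} in turn, treating $\Delta$ concretely in terms of the orthonormal basis $\{k_1,\dots,k_N\}$. For (i), I would simply invert the system: from $\alpha_i = k_i - k_{i+1}$ for $1\le i\le R-2$ together with $\alpha_{R-1}=k_{R-2}+k_{R-1}$ one recovers $k_1,\dots,k_{R-1}$ by back-substitution (for instance $k_{R-2}=\frac12(\alpha_{R-2}+\alpha_{R-1})$ once one has $k_{R-1}$, etc.), and then $\alpha_R = -\frac12(k_1+\dots+k_N)$ determines the remaining combination $k_R+\dots+k_N$; since $V$ has dimension $R$ and we have $R$ vectors that span a set containing a basis's worth of independent directions, linear independence follows. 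This step is routine linear algebra.

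For (ii) I would express an arbitrary extended root of $\lms$ in terms of $\Delta$. The key computational identity is that for $i<j$, $\ k_i-k_j = \alpha_i + \alpha_{i+1} + \dots + \alpha_{j-1}$, so every root of the form $k_i-k_j$ has all nonnegative coefficients and $m_R=0$; by symmetry $k_j-k_i$ has all nonpositive coefficients. Next, $k_i+k_j$ (with $i<j\le R-1$) equals $(k_i-k_{R-2})+(k_j-k_{R-1})+(k_{R-2}+k_{R-1})$, which is again a nonnegative integer combination with $m_R=0$, and a short separate check handles the indices involving $k_{R-1}$ or (in the $\eon$ case) $k_N, k_{N-1}$ bundled together; the sign-flipped versions are nonpositive. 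This already covers $\Phi_O$ and shows $m_R\in\{0,\pm2\}$ there — the value $\pm2$ arising precisely for roots like $-(k_i+k_j)$ whose expansion picks up $-2\alpha_R$ hidden inside, which I should double-check by writing $k_1+\dots+k_N = -2\alpha_R$ and tracking how many of the "long" generators get used. For the spinorial roots in $\Phi_S$, the cleanest route is: the lowest root $-\frac12(k_1+\dots+k_N)=\alpha_R$ has $m_R=1$; every other element of $\Phi_S$ differs from $\pm\alpha_R$ by an element of the root lattice of $\Phi_O$ (flipping a pair of signs changes $\frac12(\dots)$ by some $\pm(k_i\pm k_j)$), so it is $\alpha_R$ plus a combination of the $\alpha_1,\dots,\alpha_{R-1}$; one then argues the integer coefficients are all of one sign, using that any spinorial root with an even number of plus signs can be reached from $-\frac12\sum k_i$ by successively adding positive roots of $\Phi_O$ (equivalently, a dominance argument on the partial order $\alpha_i>\alpha_{i+1}$). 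This simultaneously establishes $m_R=\pm1$ on $\Phi_S$, giving \eqref{rema}.

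The main obstacle is the uniform sign-of-coefficients claim in (ii): showing that no extended root has mixed-sign coefficients in $\Delta$. For the $\Phi_O$ part this is the standard $D_N$-type argument and is unproblematic. The delicate part is $\Phi_S$: I would argue that adding or subtracting a pair of sign-flips to $-\frac12\sum k_i$ always corresponds to adding a \emph{positive} root of $\Phi_O$ provided we traverse the sign pattern in order of increasing index, so that the accumulated $\alpha_1,\dots,\alpha_{R-1}$-coefficients stay nonnegative; care is needed with the special generator $\alpha_{R-1}=k_{R-2}+k_{R-1}$ and, for $\esn$ and $\estn$, with the fact that the last few $k$'s only appear in the fixed combination $(k_{N-2}+k_{N-1}+k_N)$ or $(k_{N-1}+k_N)$, which constrains which sign patterns actually occur and hence must be used to rule out the would-be mixed-sign expansions. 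Once the positivity of each incremental step is pinned down, assembling it into the statement is immediate, and the coefficient $m_R$ is read off as the number of times $\alpha_R$ (equivalently $-\frac12\sum k_i$) enters, which is $0$ or $\pm2$ for $\Phi_O$ and exactly $\pm1$ for $\Phi_S$ by the parity (even number of plus signs) defining $\Phi_S$.
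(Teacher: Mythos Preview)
Your strategy coincides with the paper's: express the $k_i$ in terms of the $\alpha_j$, verify sign-uniformity for $\Phi_O$ directly, and reach the spinorial roots from $\alpha_R$ by pairwise sign-flips. Two points, however, are not right as written.

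First, the source of $m_R=\pm 2$ in $\Phi_O$ is misidentified. For $i<j\le R-1$ the roots $\pm(k_i+k_j)$ have $m_R=0$, not $\pm 2$; the nonzero $m_R$ comes exclusively from the $\Phi_O$ roots that involve the tail combination ${\bf u}$ (namely $k_N$ for $\eon$ and $k_{N-1}+k_N$ for $\estn$). In $\eon$, for example, it is the roots $\pm k_i-k_N$ (and their negatives) that carry $m_R=\pm 2$, because one must use $-2\alpha_R=k_1+\dots+k_N$ to produce $k_N$ at all. So the ``short separate check'' you defer is exactly where the $\pm 2$ lives, not in the generic $-(k_i+k_j)$.

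Second, the $\Phi_S$ argument has a real gap: you must first know \emph{which} half of $\Phi_S$ is positive before you can claim it is $\alpha_R$ plus positive $\Phi_O$ roots. The paper's (clean) observation is that the positive spinorial roots are precisely those containing $-\frac12{\bf u}$; flipping any even set of signs among the remaining $k_1,\dots,k_{R-1}$ then amounts to adding terms $k_i+k_j$ with $i<j\le R-1$, each already shown to be a nonnegative combination with $m_R=0$. No ``traversal in increasing index'' is needed---what matters is that the sign on ${\bf u}$ is never flipped, so every increment stays in the $m_R=0$ part and the total keeps $m_R=1$ with all other coefficients nonnegative. The roots with $+\frac12{\bf u}$ are then the negatives of these. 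Without isolating ${\bf u}$ this way, your sign-flip argument can inadvertently add a \emph{negative} root such as $k_i+k_N$ (which in $\eon$ has $m_R=-2$), and the uniformity claim collapses.
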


\begin{proof} The set $\Delta = \{\alpha_1 , ... ,\alpha_R\}$ is obviously a basis in $V$.
Let ${\bf u} = k_{N-2}+k_{N-1}+k_N, k_{N-1}+k_N, k_N$ for $\esn$, $\estn$, $\eon$ respectively. We have:
\bea{ccl} \label{kal}
k_{R-1} &= &\frac12 (\alpha_{R-1} - \alpha_{R-2})\\ \\
k_i &= &\alpha_i + k_{i+1} = \sum_{\ell=i}^{R-2}\alpha_\ell +\frac12 (\alpha_{R-1}-\alpha_{R-2}) \ , \ 1\le i\le R-2\\ \\
{\bf u} &= &- 2 \alpha_R - \sum_{\ell=1}^{R-2} {\ell \alpha_\ell - \frac{R-1}{2}(\alpha_{R-1}}-\alpha_{R-2})
\eea
from which we obtain, for $1\le i<j\le R-1$ and forcing $\sum_{\ell=r}^s{\alpha_\ell}=0$ if $r>s$:

\be\label{posroots1}
\left.
\begin{array}{rcl}
k_i + k_j &= & \sum_{\ell=i}^{R-3}\alpha_\ell + \sum_{\ell=j}^{R-2}\alpha_\ell + \alpha_{R-1}\\ \\
k_i - k_j &= & \sum_{\ell=i}^{j-1}\alpha_\ell
\end{array}
\right\}
1\le i<j\le R-1
\ee
for $\estn$, namely for $R=N-1=4n+3$ and ${\bf u}=k_{N-1}+k_N$:
\bea{rcl}\label{posroots2}
-{\bf u} &= & 2 \alpha_R + \sum_{\ell=1}^{R-3} {\ell \alpha_\ell} + 2n\, \alpha_{R-2} + (2n+1)\alpha_{R-1} \ , \text{ for } \estn
\eea
for $\eon$, namely for $R=N$:
%\bea{rcl}\label{posroots3}
%\pm k_i - k_N &= & 2 \alpha_N \pm \sum_{\ell=i}^{N-2} {\alpha_\ell}+ \sum_{\ell=1}^{N-3} {\ell \alpha_\ell} + (2n+\frac{1 \mp 1}{2})\alpha_{N-2}\\ \\
%&&+ (2n+1 + \frac{1 \pm 1}{2})\alpha_{N-1}
%\eea
\bea{lcl}
\pm k_i - k_N &= & 2 \alpha_N +\sum_{\ell=1}^{i-1} {\ell \alpha_\ell}+ \sum_{\ell=i}^{N-3} {(\ell\pm1)\alpha_\ell} + (2n+ \frac{1 \pm 1}{2})\alpha_{N-2}\\ \\
&&+ (2n+1 + \frac{1 \pm 1}{2})\alpha_{N-1} \ , \quad i\le N-2\\ \\
\pm k_{N-1} - k_N &= & 2 \alpha_N + \sum_{\ell=1}^{N-3} {\ell \alpha_\ell} + (2n+\frac{1 \mp 1}{2})\alpha_{N-2}+ (2n+1 + \frac{1 \pm 1}{2})\alpha_{N-1}
\label{posroots3}
\eea

We see that all the roots in \eqref{posroots1},\eqref{posroots2},\eqref{posroots3} are the sum of simple roots with all positive integer coefficients.
These are half of the roots in $\Phi_O$ and they are all positive roots. \\
The rest of the roots in $\Phi_O$ are negative and are obviously the sum of simple roots with integer coefficients that are all negative.

Finally all the roots in $\Phi_S$ that contain $-\frac12 {\bf u}$ can be obtained from $\alpha_R$ by flipping an even number of signs and this is done by adding to $\alpha_R$ a certain number of terms of the type $k_i+k_j$, $1\le i<j\le R-1$. These are all positive roots, a half of the roots in $\Phi_S$ and are linear combination of simple roots with integer coefficients that are all positive.\\
The negative roots are similarly obtained by adding to $-\alpha_N$ a certain numeber of terms of the type $-(k_i+k_j),\, 1\le i<j\le R-1$, and are linear combination of simple roots with integer coefficients that are all negative.\\
As an consequence we easily obtain \ref{rema}. \hfill
\end{proof}

\begin{proposition} \label{sproots}
For each $\alpha \in \Phi_O , \beta \in \Phi$ the scalar product $(\alpha,\beta) \in \{\pm 2, \pm 1, 0\}$; $\alpha + \beta$ ( respectively $\alpha - \beta$) is a root if and only if $(\alpha , \beta) = -1$ (respectively $+1$); if both $\alpha+\beta$ and  $\alpha-\beta$ are not in $\Phi\cup\{0\}$ then  $(\alpha,\beta)=0$.\\
 For each $\alpha , \beta \in \Phi_S$ the scalar product $(\alpha,\beta) \in \{\pm (n+1), \pm n , \pm (n-1), ..., 0 \}$; $\alpha + \beta$ ( respectively $\alpha - \beta$) is a root if and only if $(\alpha , \beta) = -n$ (respectively $+n$).\\
For $\alpha , \beta \in \Phi$ if $\alpha+\beta$ is a root then $\alpha - \beta$ is not a root.
\end{proposition}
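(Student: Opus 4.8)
The plan is to reduce everything to explicit computation in the orthonormal basis $\{k_1,\dots,k_N\}$, organised around two elementary structural facts. First, $\Phi=\Phi_O\sqcup\Phi_S$, where every root of $\Phi_O$ has all $k$-coordinates in $\{0,\pm1\}$ and squared length $2$, while every root of $\Phi_S$ has all $k$-coordinates equal to $\pm\tfrac12$ and squared length $N/4=n+1$. Second, the converse holds: an integer-coordinate vector of squared length $2$ is necessarily of the form $\pm k_i\pm k_j$, and a half-integer-coordinate vector of squared length $n+1$ is necessarily of the form $\tfrac12(\pm k_1\pm\dots\pm k_N)$, since $\sum a_m^2=N$ with the $a_m$ odd integers forces each $a_m=\pm1$. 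Since the sum of two integer-coordinate (resp.\ two half-integer-coordinate) vectors is integer-coordinate, while the sum of one of each is half-integer-coordinate, the coordinate parity of $\alpha\pm\beta$ already dictates which of $\Phi_O,\Phi_S$ it could lie in, and then its squared length pins down $(\alpha,\beta)$.

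For the first assertion ($\alpha\in\Phi_O$, $\beta\in\Phi$) write $\alpha=\varepsilon_1 k_i+\varepsilon_2 k_j$. If $\beta\in\Phi_O$ then $(\alpha,\beta)$ is a sum of signed overlaps of the two index pairs, so it lies in $\{0,\pm1,\pm2\}$, the values $\pm2$ occurring only for $\beta=\pm\alpha$; if $\beta\in\Phi_S$ then $(\alpha,\beta)=\tfrac12(\varepsilon_1\lambda_i+\varepsilon_2\lambda_j)\in\{0,\pm1\}$. For the ``root iff $(\alpha,\beta)=-1$'' clause I would argue both directions by length: if $\alpha+\beta\in\Phi$ it lies in the part forced by coordinate parity, hence $(\alpha+\beta,\alpha+\beta)$ equals $2$ or $n+1$ accordingly, and expanding $(\alpha+\beta,\alpha+\beta)=2+2(\alpha,\beta)+(\beta,\beta)$ forces $(\alpha,\beta)=-1$; conversely, when $(\alpha,\beta)=-1$ one exhibits $\alpha+\beta$ outright --- in the $\Phi_O$ case the unique shared coordinate cancels, leaving some $\pm k_a\pm k_b$; in the $\Phi_S$ case exactly two coordinates of $\beta$ get their signs flipped --- and one checks it is a genuine root. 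This is the only point where one must respect the index ranges of $\esn,\estn$ and their distinguished blocks $k_{N-1}+k_N$ and $k_{N-2}+k_{N-1}+k_N$: the flips always happen either entirely outside such a block or across the whole block, so the defining shape is preserved. Replacing $\beta$ by $-\beta\in\Phi$ gives the statement for $\alpha-\beta$, and the final dichotomy follows because the only remaining values of $(\alpha,\beta)$ are $0$ (the asserted conclusion) and $\pm2$, which force $\beta=\pm\alpha$ and hence $\alpha\pm\beta=0$.

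For the second assertion ($\alpha,\beta\in\Phi_S$) write $\alpha=\tfrac12\sum\lambda_m k_m$, $\beta=\tfrac12\sum\mu_m k_m$ and set $E=\#\{m:\lambda_m=\mu_m\}$, so $(\alpha,\beta)=\tfrac14(2E-N)=\tfrac{E}{2}-(n+1)$. The decisive point --- and the sole genuine use of the defining ``even number of $+$ signs'' condition --- is that $\alpha$ and $\beta$ each have an even number of $+1$ coordinates; as $N$ is even this forces $E$ even, hence $(\alpha,\beta)\in\zz$, and obviously $|(\alpha,\beta)|\le N/4=n+1$, which gives the range $\{0,\pm1,\dots,\pm(n+1)\}$. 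Next, $\alpha+\beta=\tfrac12\sum(\lambda_m+\mu_m)k_m$ has integer coordinates, exactly $E$ of them nonzero, so it is a root precisely when it lands in $\Phi_O$, i.e.\ precisely when $E=2$, i.e.\ precisely when $(\alpha,\beta)=-n$; when $E=2$ one again checks the two surviving coordinates give an allowed root (agreements inside a distinguished block occur in groups of $3$ or of $2$, so $E=2$ forces them to sit where $\pm k_i\pm k_j$ or $\pm(k_{N-1}+k_N)$ is legitimate for $\esn$/$\estn$). Applying this with $-\beta\in\Phi_S$ in place of $\beta$ yields ``$\alpha-\beta$ a root iff $(\alpha,\beta)=n$''.

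Finally, the last sentence is an immediate corollary of the first two: if $\alpha\in\Phi_O$ or $\beta\in\Phi_O$, then $\alpha+\beta$ being a root forces $(\alpha,\beta)=-1\ne1$, so $\alpha-\beta$ is not a root; if $\alpha,\beta\in\Phi_S$, then $\alpha+\beta$ being a root forces $(\alpha,\beta)=-n$, and $n\ge1$ gives $-n\ne n$, so $\alpha-\beta$ is not a root. I expect the $\Phi_S\times\Phi_S$ case to be the real obstacle: it is the only place where the system is not simply laced, and it is exactly the parity constraint built into $\Phi_S$ that rescues integrality of $(\alpha,\beta)$ and keeps $\alpha\pm\beta$ inside $\Phi$ --- without it one would meet half-integral inner products and spurious ``roots''. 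The secondary, purely bookkeeping, difficulty is keeping the distinguished blocks of $\esn$ and $\estn$ in view at every step.
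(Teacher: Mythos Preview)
Your proof is correct and follows essentially the same route as the paper: explicit coordinate computation in the orthonormal basis, counting sign agreements for the $\Phi_S\times\Phi_S$ case, and reading off the last statement from the first two. The paper is terser --- it dispatches the $\Phi_O\times\Phi_O$ case by invoking that $\Phi_O$ is a simply laced root system, and it asserts without justification that the number of sign disagreements between two $\Phi_S$ roots is even --- whereas you spell out the parity argument and, more notably, track the distinguished coordinate blocks of $\esn$ and $\estn$ to confirm that the constructed $\alpha+\beta$ is genuinely a root there; the paper's proof is tacitly written for $\eon$ only. Your use of the squared-length expansion $(\alpha+\beta,\alpha+\beta)=(\alpha,\alpha)+2(\alpha,\beta)+(\beta,\beta)$ for the ``only if'' direction is a mild variant of the paper's direct sign-matching, but the two are equivalent in content.
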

\begin{proof} If both $\alpha , \beta \in \Phi_O$ the proof follows from the fact that $\Phi_O$ is the root system of a simply laced Lie algebra. If $\alpha \in \Phi_O , \beta \in \Phi_S$ then obviously $(\alpha,\beta) \in \{\pm 1, 0\}$. Moreover, let us write $\alpha = \sigma_i k_i +\sigma_j k_j \, , \ i < j$, $\sigma_{i,j}\in \{-1,1\}$. Then $(\alpha , \beta) = -1$ if and only if $\beta = \frac12 (\pm k_1 \pm ... -\sigma_i k_i \pm ... -\sigma_j k_j \pm ... \pm k_N)$, which is true if and only if $\alpha + \beta \in \Phi$ (in particular $\alpha + \beta \in \Phi_S$). Similarly $(\alpha , \beta) = 1$ if and only if $\alpha - \beta \in \Phi$. As a consequence, if both $\alpha\pm\beta \notin \Phi\cup\{0\}$ then  $(\alpha,\beta)\ne \pm 1$ and also $(\alpha,\beta)\ne \pm 2$ because $(\alpha,\beta)= \pm 2$ if and only if $\alpha = \pm \beta$; therefore $(\alpha,\beta)=0$.\\
If both $\alpha , \beta \in \Phi_S$, then all their signs but an even number $2m$ must be equal, $m=0,...,N/2=2(n+1)$ and we get $(\alpha,\beta) = \frac14 (N-2m - 2m) = n+1-m = n+1, n, n-1, ... ,-(n+1)$ for  $m=0,...,2(n+1)$. Moreover, since $\pm k_i \pm k_i  \in \{0,\pm 2 k_i\} \, , \ i = 1, ... N$ then $\alpha + \beta \in \Phi$ if and only if all signs are opposite but 2 (in which case $\alpha + \beta$ is actually in $\Phi_O$) and this is true if and only if $(\alpha , \beta) = -\frac14(N-4)=-n$. Similarly $(\alpha , \beta) = n$ if and only if $\alpha - \beta \in \Phi$. The last statement of the Proposition follows trivially. \hfill
\end{proof}

The preceding propositions extend to EP analogous results for the root system of simply laced Lie algebras. They are essential in determining the structure of the algebra $\lms$ that we associate to the extended roots in the next section.
%===

\section{The $\lms$ algebra}\label{sec:lms}

{\it Mathematics is the art of giving the same name to different things.}

H. Poincar\'{e}\\

We define the $\lms$ algebra (as before $\lms$ is either $\esn$ or $\estn$ or $\eon$) by extending the construction of a Lie algebra from a root system, \cite{carter} \cite{hum} \cite{graaf}. In particular we generalize the algorithm in \cite{graaf} for simply laced Lee algebras, since also in our set of extended roots the $\beta$ chain through $\alpha$, namely the set of roots $\alpha+c\beta$, $c\in \zz$, has length one.\\

We give $\lms$ an algebra structure of rank $R$ over a field extension $\fff$ of the rational integers $\zz$ in the following way
\footnote{Specifically, we will take $\fff$ to be the complex field $\mathbb{C}$.}
:
\bit
\item[a)]  we select the set of simple extended roots $\Delta = \{\alpha_1 , ... ,\alpha_R\}$ of $\Phi$
\item[b)] we select a basis $\{ h_1 ,...,h_R\}$ of the $R$-dimensional vector space $H$ over $\fff$ and set $h_\alpha = \sum_{i=1}^R c_i h_i$ for each $\alpha  \in \Phi$ such that $\alpha = \sum_{i=1}^R c_i \alpha_i$
\item[c)] we associate to each $\alpha  \in \Phi$ a one-dimensional vector space $L_\alpha$ over $\fff$ spanned by $x_\alpha$
\item[d)] we define $\lms = H \bigoplus_{\alpha \in \Phi} {L_\alpha}$ as a vector space over $\fff$
\item[e)] we give $\lms$ an algebraic structure by defining the following multiplication on the basis
$\blms = \{ h_1 ,...,h_R\} \cup \{x_\alpha \ | \ \alpha \in \Phi\}$, extended by linearity to a bilinear multiplication $\lms\times \lms\to \lms$:
	\be\begin{array}{ll}
	&[h_i,h_j] = 0 \ , \ 1\le i, j  \le R \\
	&[h_i , x_\alpha] = - [x_\alpha , h_i] = (\alpha, \alpha_i )\, x_\alpha \ , \ 1\le i \le R \ , \ \alpha \in \Phi \\
	&[x_\alpha, x_{-\alpha} ] = - h_\alpha\\
	&[x_\alpha,x_\beta] = 0 \ \text{for } \alpha, \beta \in \Phi \ \text{such that } \alpha + \beta \notin			 	\Phi \ \text{and } \alpha \ne - \beta\\
	&[x_\alpha,x_\beta] = \varepsilon (\alpha , \beta)\,  x_{\alpha+\beta}\ \text{for } \alpha , \beta \in \Phi \ \text{such that }  \alpha+ \beta \in \Phi\\
	\end{array} \label{comrel}\ee
\eit

where $\varepsilon (\alpha , \beta)$ is the {\it asymmetry function}, introduced in \cite{kac}, see also \cite{graaf}, defined as follows:\\
\begin{definition} Let $\Ll$ denote the lattice of all linear combinations of the simple extended roots with integer coefficients
\be
\Ll = \left\{ \sum_{i=1}^R c_i \alpha_i \ |\ c_i \in \zz \ , \ \alpha_i \in \Delta \right\}
\label{lattice}
\ee
the asymmetry function $\varepsilon (\alpha , \beta) : \ \Ll \times \Ll \to \{-1,1\}$ is defined by:
\be\label{epsdef}
\varepsilon (\alpha , \beta) = \prod_{i,j=1}^R \varepsilon (\alpha_i , \alpha_j)^{\ell_i m_j} \quad \text{for } \alpha = \sum_{i=1}^R \ell_i\alpha_i \ ,\ \beta = \sum_{j=1}^R m_j \alpha_j
\ee
where $\alpha_i , \alpha_j \in \Delta$ and
\be
\varepsilon (\alpha_i , \alpha_j) = \left\{
\begin{array}{ll}
-1 & \text{if } i=j\\ \\
-1 & \text{if } \alpha_i + \alpha_j  \text{ is a root and } \alpha_i < \alpha_j\\ \\
+ 1 & \text{otherwise}
\end{array}
\right.
\ee
\end{definition}

We now show some properties of the asymmetry function. In particular we show that, for $\as,\bb,\as+\bb\in\Phi$, $\varepsilon (\alpha , \beta) = - \varepsilon (\beta , \alpha)$ which implies that the bilinear product \eqref{comrel} is antisymmetric.

\begin{proposition} \label{epsprop}The asymmetry function $\varepsilon$ satisfies, for $\alpha , \beta, \gamma , \delta \in \Ll$, $\alpha = \sum{m_i \alpha_i}$ and $\beta = \sum{n_i \alpha_i}$:
\bes\begin{array}{rrcl}
i) & \varepsilon (\alpha + \beta, \gamma) & =& \varepsilon (\alpha , \gamma)\varepsilon (\beta , \gamma) \\
ii) &\varepsilon (\alpha ,\gamma + \delta) & = &\varepsilon (\alpha , \gamma)\varepsilon (\alpha , \delta) \\
iii) &\varepsilon (\alpha , \alpha) & = &(-1)^{\frac12 (\alpha ,\alpha)-m_R^2 \frac{n-1}2}\\
iv) &\varepsilon (\alpha , \beta) \varepsilon (\beta , \alpha) &=& (-1)^{(\alpha , \beta) - m_R n_R (n-1)}\\
v) &\varepsilon (0 , \beta) &=& \varepsilon (\alpha , 0) = 1 \\
vi) &\varepsilon (-\alpha , \beta) &=& \varepsilon (\alpha , \beta)^{-1} = \varepsilon (\alpha , \beta) \\
vii) &\varepsilon (\alpha , -\beta) &=& \varepsilon (\alpha , \beta)^{-1} = \varepsilon (\alpha , \beta) \\
\end{array}\ees
\end{proposition}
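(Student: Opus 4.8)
The plan is to derive everything from the multiplicativity that is already built into the defining product \eqref{epsdef}, together with a short computation of $\varepsilon$ on pairs of \emph{simple} extended roots. Parts $i)$ and $ii)$ are immediate: since $\varepsilon(\alpha,\beta)=\prod_{i,j}\varepsilon(\alpha_i,\alpha_j)^{\ell_i m_j}$ depends on its two arguments only through the exponents $\ell_i,m_j$, and these add when roots are added, $\varepsilon$ is multiplicative in each slot. Part $v)$ is equally immediate: if $\alpha=0$ then every exponent $\ell_i$ vanishes, so the product is $1$ (and likewise for $\beta=0$). Then $vi)$ follows from $i)$ and $v)$, namely $\varepsilon(\alpha,\beta)\,\varepsilon(-\alpha,\beta)=\varepsilon(0,\beta)=1$, so that $\varepsilon(-\alpha,\beta)=\varepsilon(\alpha,\beta)^{-1}$, which coincides with $\varepsilon(\alpha,\beta)$ because $\varepsilon$ takes values in $\{-1,1\}$; $vii)$ is the same argument applied to the second slot via $ii)$.

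For $iii)$ and $iv)$ the first thing I would isolate is the behaviour of $\varepsilon$ on the base $\Delta$ of \eqref{sroots}. Directly from the definition, $\varepsilon(\alpha_i,\alpha_i)=-1$ for all $i$. From the explicit form of the $\alpha_i$ one reads off $(\alpha_i,\alpha_i)=2$ for $i<R$, while $(\alpha_R,\alpha_R)=N/4=n+1$; and for $i\ne j$ a direct inspection gives $(\alpha_i,\alpha_j)\in\{0,-1\}$, with $(\alpha_i,\alpha_j)=-1$ precisely when $\alpha_i+\alpha_j\in\Phi$ --- for $i,j<R$ this is the standard fact about sums of roots in the simply-laced system $\Phi_O$ of Remark \ref{r:dd}, and for a pair $\{\alpha_i,\alpha_R\}$ (with $\alpha_i\in\Phi_O$, $\alpha_R\in\Phi_S$) it is Proposition \ref{sproots}. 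Combining this with the definition of $\varepsilon$ and the ordering $\alpha_1>\cdots>\alpha_R$: for $i\ne j$, exactly one of $\varepsilon(\alpha_i,\alpha_j),\varepsilon(\alpha_j,\alpha_i)$ is $-1$ when $\alpha_i+\alpha_j$ is a root, and both are $+1$ otherwise, so $\varepsilon(\alpha_i,\alpha_j)\,\varepsilon(\alpha_j,\alpha_i)=(-1)^{(\alpha_i,\alpha_j)}$ for $i\ne j$.

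Granting this, $iii)$ and $iv)$ are bookkeeping. For $iii)$, write $\alpha=\sum m_i\alpha_i$ and split $\varepsilon(\alpha,\alpha)=\prod_{i,j}\varepsilon(\alpha_i,\alpha_j)^{m_i m_j}$ into the diagonal terms and the off-diagonal pairs $(i,j)\leftrightarrow(j,i)$; this yields $\varepsilon(\alpha,\alpha)=(-1)^{\sum_i m_i^2+\sum_{i<j}(\alpha_i,\alpha_j)m_i m_j}$. Since $(\alpha,\alpha)=2\sum_{i<R}m_i^2+(n+1)m_R^2+2\sum_{i<j}(\alpha_i,\alpha_j)m_i m_j$, one finds $\tfrac{1}{2}(\alpha,\alpha)-\tfrac{n-1}{2}m_R^2=\sum_i m_i^2+\sum_{i<j}(\alpha_i,\alpha_j)m_i m_j$, which is an integer and reduces modulo $2$ to precisely the exponent just obtained. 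Part $iv)$ is the same computation with two roots: in $\varepsilon(\alpha,\beta)\,\varepsilon(\beta,\alpha)=\prod_{i,j}\varepsilon(\alpha_i,\alpha_j)^{m_i n_j+n_i m_j}$ the diagonal terms carry the even exponent $2m_i n_i$ and drop out, leaving $(-1)^{\sum_{i<j}(\alpha_i,\alpha_j)(m_i n_j+m_j n_i)}$; comparing with $(\alpha,\beta)=2\sum_{i<R}m_i n_i+(n+1)m_R n_R+\sum_{i<j}(\alpha_i,\alpha_j)(m_i n_j+m_j n_i)$ shows the exponent equals $(\alpha,\beta)-(n+1)m_R n_R$, which is congruent modulo $2$ to $(\alpha,\beta)-(n-1)m_R n_R$ because $(n+1)-(n-1)=2$.

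The only point that needs genuine care --- the main obstacle --- is the bookkeeping around the spinor simple root $\alpha_R$, whose squared length is $n+1$ rather than $2$: this is exactly what produces the correction terms $m_R^2\,\tfrac{n-1}{2}$ and $m_R n_R(n-1)$ in $iii)$ and $iv)$ that distinguish EP from the classical simply-laced case. One has to check, as in the previous paragraph, both that these corrections restore integrality of the exponents of $-1$ (the half-integer $\tfrac{n+1}{2}m_R^2$ being cancelled by $\tfrac{n-1}{2}m_R^2$ up to the integer $m_R^2$) and that $n+1\equiv n-1\pmod 2$, so that the closed forms in $iii)$ and $iv)$ hold on the whole lattice $\Ll$. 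Everything else is a formal manipulation of the product \eqref{epsdef}.
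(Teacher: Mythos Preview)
Your proof is correct and follows essentially the same line as the paper's: multiplicativity gives $i)$, $ii)$, $v)$, $vi)$, $vii)$ at once, and the substantive work is the computation of $\varepsilon(\alpha,\alpha)$ via the key identity $\varepsilon(\alpha_i,\alpha_j)\varepsilon(\alpha_j,\alpha_i)=(-1)^{(\alpha_i,\alpha_j)}$ on simple extended roots, together with the observation that only $\alpha_R$ has squared length $n+1\neq 2$. The one genuine difference is in $iv)$: you redo the direct bookkeeping with two lattice vectors, whereas the paper obtains $iv)$ by polarization --- applying $iii)$ to $\alpha+\beta$, expanding $\varepsilon(\alpha+\beta,\alpha+\beta)$ via $i)$ and $ii)$, and comparing exponents (so that $\ell_R^2-m_R^2-n_R^2=2m_Rn_R$ produces the $m_Rn_R(n-1)$ correction in one step). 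The polarization is slightly slicker and avoids the separate parity argument for the diagonal terms, but your direct computation is equally valid and makes the role of the $\alpha_R$ correction just as transparent.
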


\begin{proof} The first two properties follow directly from the definition. In order to prove $iii)$ we first notice that for $\alpha_i,\alpha_j\in \Delta$ , $i\ne j$, $(\alpha_i, \alpha_j) \in \{0,-1\}$. Therefore:
\be\begin{array}{rcl}
\varepsilon (\alpha , \alpha) &=& \prod_{1\le i,j\le R} \varepsilon (\alpha_i , \alpha_j)^{m_im_j}  =  \prod_{1\le i<j\le R} (-1)^{m_im_j (\alpha_i , \alpha_j)}  \prod_{1\le i \le R} (-1)^{m_i^2}\\ \\
&=& (-1)^{\sum_{1\le i<j\le R}{m_im_j (\alpha_i , \alpha_j)} + \frac12 \sum_{1\le i\le R}{m_i^2(\alpha_i , \alpha_i)} - \frac12  \sum_{1\le i\le R}{m_i^2 ((\alpha_i , \alpha_i)} - 2)} \\ \\
&=& (-1)^{\frac12 (\alpha , \alpha) - \frac12 m_R^2 ((\alpha_R , \alpha_R) - 2)} =  (-1)^{\frac12 (\alpha , \alpha) - m_R^2 \frac{n-1}2}
\end{array}\ee

Property $iv)$ follows from $iii)$ by replacing $\alpha$ with $\alpha+\beta$ and using the first two. If $\alpha = \sum m_i  \alpha_i$, $\beta = \sum n_i \alpha_i$ and $\alpha +\beta = \sum \ell_i \alpha_i = \sum (m_i+n_i) \alpha_i$ we get:

\bea{rcl}
\varepsilon (\alpha + \beta , \alpha +\beta) &=& (-1)^{\frac12 (\alpha +\beta, \alpha+\beta) - \ell_R^2 \frac{n-1}2} =
(-1)^{\frac12 (\alpha, \alpha)  +  \frac12 (\beta, \beta) +  (\alpha, \beta)- \ell_R^2 \frac{n-1}2} \\ \\
&=& (-1)^{\frac12 (\alpha , \alpha) - m_R^2 \frac{n-1}2} (-1)^{\frac12 (\beta, \beta) - n_R^2 \frac{n-1}2}
\varepsilon (\alpha, \beta)\varepsilon (\beta , \alpha )
\eea
from which the property follows. Property $v)$ is a trivial consequence of the definition, whereas properties $vi)$ and $vii)$ follow from property $v)$ together with $i)$ and $ii)$. \hfill
\end{proof}

\begin{proposition} If $\alpha, \beta ,\alpha+\beta \in \Phi$ then:
\beas{rll}
i) &\varepsilon (\alpha , \alpha) = -1 & \alpha \in \Phi\\
ii) &\varepsilon (\alpha , \beta) = - \varepsilon (\beta , \alpha) & \alpha,\beta,(\alpha+\beta)\in \Phi\qquad \text{antisymmetry}\\
iii) &\ep(\as ,\bb) = \ep(\bb, \as + \bb) & \text{if } \as, \as+\bb \in \Phi\, , \ \bb\in \Ll \\
iv) &\ep(\as ,\bb) = \ep(\bb, \as - \bb) & \text{if } \as, \as-\bb \in \Phi\, , \ \bb\in \Ll \\
\eeas
\label{remas}
\end{proposition}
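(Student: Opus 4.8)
The plan is to bootstrap everything off part (i): once (i) is known, parts (ii), (iii) and (iv) follow by purely formal manipulation with the multiplicativity and sign-reversal identities of Proposition~\ref{epsprop}. Only part (i) needs genuine input about the extended root system, and that input is exactly what Proposition~\ref{approp} (the values of $m_R$) and the explicit descriptions of $\Phi_O$ and $\Phi_S$ provide.

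For (i) I would start from Proposition~\ref{epsprop}(iii), namely $\ep(\as,\as)=(-1)^{\frac12(\as,\as)-m_R^2\frac{n-1}{2}}$ for $\as=\sum_i m_i\alpha_i\in\Phi$, and split according to the two types of extended root. If $\as\in\Phi_O$ then $(\as,\as)=2$, and by Proposition~\ref{approp} $m_R\in\{0,\pm2\}$, so the exponent is $1-m_R^2\frac{n-1}{2}$, which equals $1$ or $1-2(n-1)$ and is in either case an odd integer, so $\ep(\as,\as)=-1$. If $\as\in\Phi_S$ then $(\as,\as)=N/4=n+1$ while $m_R=\pm1$, so the exponent is $\frac{n+1}{2}-\frac{n-1}{2}=1$; note that the possibly-fractional pieces cancel, so the exponent is a genuine integer and it equals $1$ for every $n$, giving again $\ep(\as,\as)=-1$. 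Since $\Phi=\Phi_O\cup\Phi_S$ for each of $\esn,\estn,\eon$, this covers all $\as\in\Phi$.

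Granting (i), antisymmetry (ii) is immediate: as $\as,\bb,\as+\bb\in\Phi$, multiplicativity of $\ep$ in each argument (Proposition~\ref{epsprop}(i)--(ii)) gives
\[
-1=\ep(\as+\bb,\as+\bb)=\ep(\as,\as)\,\ep(\as,\bb)\,\ep(\bb,\as)\,\ep(\bb,\bb)=(-1)(-1)\,\ep(\as,\bb)\,\ep(\bb,\as),
\]
so $\ep(\as,\bb)\,\ep(\bb,\as)=-1$, which, since $\ep$ is $\{-1,1\}$-valued, is precisely $\ep(\as,\bb)=-\ep(\bb,\as)$. For (iii), set $\gamma:=\as+\bb\in\Phi$, so $\bb=\gamma-\as$; then Proposition~\ref{epsprop}(ii),(vii) and part (i) give $\ep(\as,\bb)=\ep(\as,\gamma)\,\ep(\as,-\as)=\ep(\as,\gamma)\,\ep(\as,\as)=-\ep(\as,\gamma)$, while Proposition~\ref{epsprop}(i),(vi) and part (i) give $\ep(\bb,\gamma)=\ep(\gamma,\gamma)\,\ep(-\as,\gamma)=\ep(\gamma,\gamma)\,\ep(\as,\gamma)=-\ep(\as,\gamma)$; comparing yields $\ep(\as,\bb)=\ep(\bb,\as+\bb)$. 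Part (iv) is the identical computation with $\delta:=\as-\bb\in\Phi$ in place of $\gamma$: $\ep(\as,\bb)=\ep(\as,\as)\,\ep(\as,-\delta)=-\ep(\as,\delta)$ and $\ep(\bb,\delta)=\ep(\as,\delta)\,\ep(-\delta,\delta)=-\ep(\as,\delta)$, so $\ep(\as,\bb)=\ep(\bb,\as-\bb)$.

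The only real obstacle is part (i): it is the sole step using concrete information about $\Phi$, namely the exact value of $(\as,\as)$ and of the coefficient $m_R$ for each kind of extended root, followed by the parity check that the exponent is odd in every case — and in particular the observation that the $n$-dependent fractional terms cancel for roots in $\Phi_S$. Everything after (i) is formal and invokes no further facts about the root system.
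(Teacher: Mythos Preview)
Your proof is correct and follows essentially the same route as the paper: part (i) is obtained from Proposition~\ref{epsprop}(iii) by casework on $\Phi_O$ versus $\Phi_S$ using the values of $(\alpha,\alpha)$ and $m_R$ from Proposition~\ref{approp}, and parts (ii)--(iv) are then deduced purely formally from (i) via the multiplicativity and sign properties of $\varepsilon$. The only cosmetic difference is that the paper's computation for (iii) is written as a single chain while you split it into two parallel reductions to $-\varepsilon(\alpha,\gamma)$, but the manipulations are identical.
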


\begin{proof} By \eqref{rema} if $\alpha\in \Phi_O$ then $(\alpha,\alpha) = 2$ and $m_R^2/2$ is even, hence $\varepsilon (\alpha,\alpha) = -1$. If $\alpha\in \Phi_S$ then $(\alpha,\alpha) = n+1$ and $m_R^2=1$. Therefore if $(-1)^{\frac12 (\alpha ,\alpha)-m_R^2 \frac{n-1}2}= (-1)^{\frac12 (n+1-n+1)} = -1$. As a consequence, if $\alpha, \beta ,\alpha+\beta \in \Phi$, then $-1=\varepsilon (\alpha+\beta ,\alpha+ \beta)=\varepsilon (\alpha , \beta)\varepsilon (\beta,\alpha)$, hence $\varepsilon (\alpha , \beta) = - \varepsilon (\beta , \alpha)$ \\
Finally we prove $iii)$ - similar proof for $iv)$:
\bes
\ep(\as ,\bb) = \ep(\as ,\as-\as+\bb) =\ep(\as ,\as)\ep(\as ,\as+\bb) =  - \ep(\as+\bb -\bb ,\as+\bb) = \ep(\bb, \as + \bb)
\ees
\hfill
\end{proof}

%===
\section{T-algebras}\label{sec:ta}
{\it The essence of Mathematics lies in its freedom.}

G. Cantor\\

We now concentrate on the set of roots $T_{(a,b)}$ on a tip of the Magic Star. In order to fix one, let us consider $T_{(1,1)}$ and denote it simply by $T$. The reader will forgive us if we shall be concise whenever there is no risk of misunderstanding and use $T$ to denote both the set of roots and the set of elements in $\lms$ associated to those roots. An element of $T$ is an $\fff$-linear combination of $x_\as, x_\beta, ...$ for $\as,\bb,...$ in $T_{(1,1)}$. We give $T$ an algebraic structure with a symmetric product, thus mimicking the case $n=1$ when $T$ is a Jordan algebra.\\

Let us first show what happens in the case $n=1$, in particular for $\eo$. It is proven in \cite{Marrani-Truini-1} that $T=J_{(1,1)}$ is the Jordan Algebra $\joto$ of $3\times 3$ Hermitean matrices over the octonions $\mathbb{O}$. Let us denote by
$P_1 := E_{11}$, $P_2 := E_{22}$, $P_3 := E_{33}$ the three trace-one idempotents, whose sum is the identity in $\joto$, $E_{ii}$ representing the matrix with a 1 in the $(ii)$ position and zero elsewhere.\\
Let us identify the algebra $\dq \subset \fq\subset\es\subset\es\oplus\adc\subset\eo$ with the roots $\pm k_i \pm k_j$, $4\le i<j \le 7$ and  $P_1,P_2,P_3$ with the elements of $J_{(1,1)}$ which are left invariant by $\mathbf{d_4}$, \cite{jacob2}. This uniquely identifies $P_1 , P_2 , P_3$ with the roots $k_1+ k_8$, $k_1- k_8$ and $-k_2-k_3$. Since $\est\subset\eo$ is 3-graded, one can define a 3-linear product on $J_{(1,1)}$, \cite{mac1}, and a Jordan product through the correspondence with the quadratic formulation of Jordan algebras.\\

Still in the case of $\eo$ ($\eon$ for $n=1$), if we identify the $\eo$ generators corresponding to the roots in $\Phi_O$ with the bosons and those corresponding to the roots in $\Phi_S$ with the fermions, then the 11 bosons in $T$ are $k_1 \pm k_j$, $j=4,...,8$ and $-k_2-k_3$. The three bosons $k_1\pm k_8$ and $-k_2-k_3$, correspond to the three diagonal idempotents $\xu,\xd,\xt$, which are left invariant by $\dq$, whose roots are $\pm k_i \pm k_j \ , \ 4\le i<j \le 7$.\\
The remaining 8 vectors (bosons) and the 2x8 spinors (fermions) are linked by triality. Notice that one spinor has $\frac12(k_1-k_2-k_3 + k_8)$ fixed and even number of + signs among $\pm k_4 \pm k_5\pm k_6\pm k_7$, while the other one has $\frac12(k_1-k_2-k_3 - k_8)$ fixed and odd number of + signs among $\pm k_4\pm k_5\pm k_6\pm k_7$. So they are both 8 dimensional representations of $\dq$.\\
Therefore a generic element of $\joto$ is a linear superposition of 3 diagonal elements plus one vector and two spinors (or a bispinor). The vector can be viewed as the $(12)$ octonionic entry of the matrix (plus its octonionic conjugate in the $(21)$ position) and the two 8-dimensional spinors as the $(31)$ and $(23)$ entry (plus their respective octonionic conjugate in the $(31)$ and $(23)$ position).\\

We do the same for $n\ge 1$ and consider the most general case $\eon$ (the other two cases $\esn$ and $\estn$ being a restriction of it).\\
Let us define $\xu$, $\xd$ and $\xt$ as the elements of $\lms$ in $T$ associated to the roots $\rho_1:=k_1+ k_N$, $\rho_2:=k_1- k_N$ and $\rho_3:=-k_2-k_3$:
\be\label{notp}
\xu \lra \rho_1:=k_1+ k_N \ ; \ \xd \lra \rho_2:=k_1- k_N \ ; \ \xt \lra \rho_3:=-k_2 - k_3
\ee
They are left invariant by the Lie subalgebra $\mathbf{d_{N-4}}=\dqn$, whose roots are $\pm k_i \pm k_j \ , \ 4\le i<j \le N-1$.\\
We denote by $\tfo$ the set of roots in $T\cap \Phi_O$,  by $\tfop$ the set of roots in $\tfo$ that are not $\rho_1,\rho_2,\rho_3$ and by $\tfs$ the set of roots in $T\cap \Phi_S$. In the case we are considering, where $T=T_{(1,1)}$ we have $\tfop = \{ k_1\pm k_j\, ,\ j=4,...,N-1\}$ and $\tfs= \{ \frac{1}{2} (k_1 - k_2 - k_3 \pm k_4 \pm ... \pm k_N)\}$, even $\#$ of $+$. We further split $\tfs$ into $\tfsp = \{ \frac{1}{2} (k_1 - k_2 - k_3 \pm k_4 \pm ... + k_N)\}$ and $\tfsm= \{ \frac{1}{2} (k_1 - k_2 - k_3 \pm k_4 \pm ... - k_N)\}$. Then $v\in \tfop$ is an $8n$-dimensional vector and $s^\pm\in \tfspm$ are $2^{4n-1}$-dimensional spinors of $\dqn$.

We write a generic element $x$ of $T$ as  $x= \sum \lambda_i \xpi + \, x_v\, +\, x_{s^+}+x_{s^-}$ where
\be
x_v = \sum_{\as \in \tfop} \lambda_\as^v x_\as
\ee
\be
x_{s^\pm} = \sum_{\as \in \tfspm} \lambda_\as^{s^\pm} x_\as
\ee

We view $\lambda_\as^v$ as a coordinate of the vector $\lambda^v$ and $\lambda_\as^{s^\pm}$ as a coordinate of the spinor $\lambda^{s^\pm}$; we denote by  $\bar\lambda^v$ ($\bar\lambda^{s^\pm}$) the vectors ( spinor) in the dual space with respect to the appropriate bilinear form and view $x$ as a $3\times 3$ Hermitean matrix:

\be\label{matrix}
\left(
\begin{array}{lll}
\lambda_1 &\lambda_v &\bar  \lambda_{s^+}\\
\bar \lambda_v &\lambda_2 &\lambda_{s^-}\\
\lambda_{s^+} &\bar  \lambda_{s^-} &\lambda_3
\end{array}
\right)
\ee
 whose entries have the following $\fff$-dimensions:
\bit
\item 1 for the {\it scalar} diagonal elements $\lambda_1,\lambda_2,\lambda_3$;
\item $8n$ for the {\it vector} $\lambda_v$;
\item $2^{4n-1}$ for the {\it spinors} $\lambda_{s^\pm}$.
\eit

We see that only for $n=1$ the dimension of the vector and the spinors is the same, whereas for $n>1$ the entries in the $(12), (21)$ position have different dimension than those in the $(31),(13),(23),(32)$ position. Nevertheless we now define a symmetric product of the elements in $T$, which then becomes a generalization of the Jordan algebra $\joto$ in a very precise sense. This type of generalization of the Jordan Algebra is known in the literature, \cite{vin}, where it is called $T-algebra$.

We denote by $I$ the element $I:=\xu+\xd+\xt$ and by $I^-$ the element $I^-:=-\xub-\xdb-\xtb$ of $\bar T:= T_{(-1,-1)}$, where $\xub$, $\xdb$ and $\xtb$ are associated to the roots $-k_1- k_N$, $-k_1+ k_N$ and $k_2+k_3$.\\

We give $T$ an algebraic structure by introducing the symmetric product, see Proposition \sref{t-al}:
\be\label{jprod}
x\jp y := \frac12 [[x,I^-],y] \quad ,\quad x,y\in T
\ee

We introduce the trace (see Proposition \sref{t-al}) $tr(x)\in \fff$ for $x\in T$ in the following way:
\be\label{trace}
\text{for } x=\ell_1 \xu+\ell_2 \xd+\ell_3 \xt +\sum_{\substack{\as\ne\rho_1,\rho_2,\rho_3}}\ell_\as x_\as\quad tr(x)=\ell_1+\ell_2+\ell_3
\ee

We denote by $tr(x,y):=tr(x\jp y)$ and by $x^2:=x\jp x$. For each $x\in T$ we define
\be \label{sharp}
x^\# = x^2 -tr(x) x - \frac12 (tr(x^2) - tr(x)^2) I
\ee
and we say that $x$ is rank-1 if $x^\#=0$. Notice that $tr(x^\#)=- \frac12 (tr(x^2) - tr(x)^2)$, therefore $x^\#=0$ implies $x^2 = tr(x) x$: a rank-1 element of $T$ is either a  nilpotent or a multiple of a {\it primitive} idempotent $u\in T$: $u^2 = u$ and $tr(u)=1$.\\

Let us also introduce $N(x)\in \fff$ for $x\in T$ in the following way:
\be\label{norm}
N(x) = \frac16 \left\{ tr(x)^3 - 3 tr(x)tr(x^2) +2 tr(x^3) \right\} = \frac13 tr(x^\#,x)
\ee
where $x^2= x\jp x$ and $x^3= x\jp x^2$.\\

Since by \eqref{norm} a rank-1 element $x$ has $N(x)=0$ any element of $T$ falls into the following classification:
\bea{ll}
\text{rank-1:} & x^\# = 0\\
\text{rank-2:} & x^\# \ne 0 \ , \ N(x)=0\\
\text{rank-3:} & N(x)\ne 0\\
\eea

Examples of rank-2 and rank-3 elements are $\xu + \xd$ and $I$ respectively.\\
We finally now show the main properties of the algebra $T$.

\begin{proposition} Let $x,y\in T$ then $[[x,I^-],y]+ [[I^-,y],x]=0$.
\label{der}
\end{proposition}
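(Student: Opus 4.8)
The statement is a Jacobi-identity-type relation, so the natural first move is to compare it with the partial Jacobi identities available for $\lms$. Recall that $I^- \in \bar T = T_{(-1,-1)}$ and $x,y \in T = T_{(1,1)}$, so that $[x,I^-]$ and $[I^-,y]$ both lie in the center $\esn$ of the Magic Star (they have zero scalar product with $k_1-k_2$ and $k_1+k_2-2k_3$), while $[[x,I^-],y]$ and $[[I^-,y],x]$ land back in $T$. Thus the claimed identity lives entirely in the $(1,1)$-tip, and by bilinearity it suffices to prove it for $x = x_\alpha$, $y = x_\beta$ with $\alpha,\beta \in T_O \cup T_S$ and $I^-$ replaced by each of $-x_{-\rho_1}, -x_{-\rho_2}, -x_{-\rho_3}$ in turn (writing $I^- = -x_{-\rho_1}-x_{-\rho_2}-x_{-\rho_3}$ from the definition just above). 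So the whole thing reduces to a finite, root-by-root check: for fixed $i\in\{1,2,3\}$, show $[[x_\alpha,x_{-\rho_i}],x_\beta] + [[x_{-\rho_i},x_\beta],x_\alpha] = 0$.

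The key structural input is Proposition \ref{sproots}: for $\rho\in\Phi_O$ and any root $\gamma$, the scalar product $(\gamma,\rho_i)\in\{0,\pm1,\pm2\}$, and $\gamma - \rho_i$ (resp. $\gamma+\rho_i$) is a root iff $(\gamma,\rho_i) = +1$ (resp. $-1$), with at most one of $\gamma\pm\rho_i$ a root. I would split into cases according to the value of $(\alpha,\rho_i)$ and $(\beta,\rho_i)$. If $(\alpha,\rho_i)\notin\{1,2\}$ then $\alpha-\rho_i \notin\Phi\cup\{0\}$, so $[x_\alpha,x_{-\rho_i}]=0$ and that term drops; similarly for $\beta$. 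The only surviving case is $(\alpha,\rho_i)=1=(\beta,\rho_i)$ (with $\alpha\ne\rho_i\ne\beta$, the idempotent-on-itself subtleties handled separately), where $\alpha-\rho_i$ and $\beta-\rho_i$ are both roots in $T$, and we must check $\varepsilon(\alpha,-\rho_i)\varepsilon(\alpha-\rho_i,\beta)\,x_{\alpha+\beta-\rho_i} + \varepsilon(-\rho_i,\beta)\varepsilon(\beta-\rho_i,\alpha)\,x_{\alpha+\beta-\rho_i} = 0$, i.e. that the two asymmetry-function products are opposite. Here I would use the multiplicativity (Proposition \ref{epsprop} i), ii)) and the shift identities (Proposition \ref{remas} iii), iv), together with $\varepsilon(\gamma,-\delta)=\varepsilon(\gamma,\delta)$): writing $\varepsilon(\alpha-\rho_i,\beta) = \varepsilon(\alpha,\beta)\varepsilon(-\rho_i,\beta) = \varepsilon(\alpha,\beta)\varepsilon(\rho_i,\beta)$ and similarly expanding the other product, the two reduce to $\varepsilon(\alpha,\rho_i)\varepsilon(\alpha,\beta)\varepsilon(\rho_i,\beta)$ versus $\varepsilon(\rho_i,\beta)\varepsilon(\beta,\alpha)\varepsilon(\rho_i,\alpha)$, and antisymmetry $\varepsilon(\alpha,\beta)=-\varepsilon(\beta,\alpha)$ (Proposition \ref{remas} ii), applicable because $\alpha+\beta-\rho_i$ a root forces the relevant sums to be roots) gives the sign flip, provided $\alpha+\beta$ itself is a root when needed — one must double check that $\alpha+\beta-\rho_i\in\Phi$ already forces the intermediate sums to be roots so that the antisymmetry lemma applies, possibly instead arguing directly that when $\alpha+\beta$ is \emph{not} a root both sides vanish.

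The main obstacle I anticipate is precisely the bookkeeping of \emph{which} pairwise sums of roots are themselves roots in each case, since the asymmetry-function identities of Propositions \ref{epsprop} and \ref{remas} are only guaranteed to give antisymmetry when the relevant triples lie in $\Phi$ — for generalized (non-Lie) $\lms$ one cannot invoke the full Jacobi identity, so every vanishing must be traced to $[x_\gamma,x_\delta]=0$ for $\gamma+\delta\notin\Phi$. I would also need to treat with care the degenerate subcases where $\alpha$ or $\beta$ equals one of the $\rho_j$ (so that $[x_{\rho_j},x_{-\rho_i}]$ either equals $0$ for $i\ne j$, by $\rho_j-\rho_i\notin\Phi$, or equals $-h_{\rho_i}$ for $i=j$, in which case $[[x_{\rho_i},x_{-\rho_i}],x_\beta] = -[h_{\rho_i},x_\beta] = -(\beta,\rho_i)\,x_\beta$ and one checks the two such Cartan-type terms cancel against each other using $(\rho_i,\rho_j)=0$ for $i\ne j$ and symmetry of the scalar product). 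Collecting all cases then yields $[[x,I^-],y] = -[[I^-,y],x]$ for basis elements, and bilinearity finishes the proof. \hfill$\square$
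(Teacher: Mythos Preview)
Your plan follows essentially the same structure as the paper's proof (linearity, reduction to triples $x_\alpha, x_\beta, x_{-\rho_i}$, case split according to which pairwise sums are roots, then an asymmetry-function identity for the generic case), but the crucial final step is backwards.

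After expanding you correctly arrive at the two products
\[
\varepsilon(\alpha,\rho_i)\,\varepsilon(\alpha,\beta)\,\varepsilon(\rho_i,\beta)
\qquad\text{and}\qquad
\varepsilon(\rho_i,\beta)\,\varepsilon(\beta,\alpha)\,\varepsilon(\rho_i,\alpha),
\]
and you want their sum to vanish. Now $\alpha-\rho_i\in\Phi$, so Proposition~\ref{remas}\,ii) applied to the pair $(\alpha,-\rho_i)$ gives $\varepsilon(\alpha,\rho_i)=-\varepsilon(\rho_i,\alpha)$. Substituting, cancellation requires $\varepsilon(\alpha,\beta)=\varepsilon(\beta,\alpha)$, i.e.\ \emph{symmetry}, not antisymmetry. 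Your appeal to Proposition~\ref{remas}\,ii) for the pair $(\alpha,\beta)$ is precisely what fails: both $\alpha,\beta\in T_{(1,1)}$, so $\alpha+\beta$ is \emph{never} a root (its Magic-Star coordinates would be $(2,2)$), and the antisymmetry lemma does not apply. Your fallback suggestion ``when $\alpha+\beta$ is not a root both sides vanish'' would then force the whole generic case to be trivially zero, which is false.

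What is actually needed is Proposition~\ref{epsprop}\,iv): $\varepsilon(\alpha,\beta)\varepsilon(\beta,\alpha)=(-1)^{(\alpha,\beta)-m_Nn_N(n-1)}$, together with a short case analysis (depending on whether $\alpha,\beta$ lie in $\Phi_O$ or $\Phi_S$, using \eqref{rema} for $m_N,n_N$ and Proposition~\ref{sproots} for $(\alpha,\beta)$) showing that the exponent is always even in the situation at hand; this yields $\varepsilon(\alpha,\beta)=\varepsilon(\beta,\alpha)$ and finishes the argument. The paper carries this case analysis out explicitly, and it is the genuine content of the proof --- without it, one is implicitly assuming the Jacobi identity, which $\lms$ does not satisfy for $n>1$. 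You should also verify (as the paper does) that if $\alpha-\rho_i\in\Phi$ and $\alpha+\beta-\rho_i\in\Phi$ then necessarily $\beta-\rho_i\in\Phi$ as well, so that the ``one term nonzero, the other zero'' scenario cannot occur in the generic case; this too requires a small $\Phi_O/\Phi_S$ case check rather than an abstract root-string argument.
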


\begin{proof} By the linearity it is sufficient to prove the proposition for basis elements $X_0\in \{\xub,\xdb,\xtb\}$ and basis elements $X_1, X_2$ of $T$ and the statement of the Proposition amounts to $J_0+J_2=0$, with $J_0 := [[X_0,X_1],X_2]$ and $J_2 := [[X_2,X_0],X_1]$
Set $X_0 = x_\alpha$, $\alpha\in\{-k_1\pm k_N, k_2+k_3\}$, $X_1 = x_\beta$, $X_2 = x_\gamma$. Since $X_1,X_2\in T$ then $[X_1,X_2]=0$ and $\beta+\gamma \notin \Phi\cup\{0\}$; we also have that $\as\neq\bb, \as\neq\gh$.\\

If none of the sums $\alpha+\beta$, $\alpha+\gamma$ is in $\Phi$ nor is 0, then $J_0 +J_2= 0$. Also if $\beta=
\gamma$ then $J_0 + J_2 = 0$ is trivially satisfied.\\

From now on at least one of $\alpha+\beta$, $\alpha+\gamma$, is in $\Phi\cup \{0\}$.
Suppose first that $\alpha+\beta=0$. Then $J_0 = - (\gamma, \alpha)x_\gamma$ and we have the following possibilities:
 \bit
 \item[a1)] if $\alpha + \gamma = 0$  then $\beta = \gamma$ and $J_0 +J_2= 0$ becomes trivial;
 \item[a2)] if $\alpha + \gamma \in \Phi$ then $(\gamma, \alpha)=-1$ by Proposition \sref{sproots} and $J_0 = x_\gamma$; moreover $J_2 = \varepsilon(\gamma,\alpha)\varepsilon(\alpha+\gamma,-\alpha) x_\gamma = \varepsilon(\alpha,-\alpha) x_\gamma=-x_\gamma$ (because of Propositions \sref{epsprop} and \sref{remas}) and  $J_0 +J_2= 0$ is verified;
 \item[a3)] if both $\alpha \pm \gamma \notin \Phi\cup \{ 0\}$ then $(\alpha,\gamma)=0$, by Proposition \sref{sproots} being $\alpha\in\Phi_O$, hence $J_0 =  J_2 = 0$.
 \eit
Similarly if we suppose $\alpha+\gamma=0$.\\

From now on $\alpha,\beta,\gamma\in \Phi$, $\alpha\ne\pm\beta\ne\pm\gamma\ne\pm\alpha$ and $\alpha+\beta+\gamma\in \Phi$ (notice that $\alpha+\beta+\gamma$ cannot be zero for $\as\in T_{(-1,-1)}$ and $\bb,\gh\in T_{(1,1)}$ and if $\alpha+\beta+\gamma\notin \Phi$ then obviously $J_0 = J_2 = 0$).\\

If none of $\as+\bb,\as+\gh$ is in $\Phi$ then $J_0=J_2=0$. If $\as+\bb \in \Phi$ then also $\as+\gh \in \Phi$ (and viceversa): if $\gamma\in \Phi_O$ then, by Proposition \sref{sproots}, $(\as+\bb,\gh)=-1=(\as,\gh)+(\bb,\gh)=(\as,\gh)$, implying $\as+\gh \in \Phi$; if $\bb\in \Phi_O$ and $\gamma\in \Phi_S$ then $(\bb,\gh)\in \{0,1\}$ and $(\as+\bb,\gh)=-1=(\as,\gh)+(\bb,\gh)$ implies $(\as,\gh)=-1$ and $(\bb,\gh)=0$;
if $\bb,\gamma\in \Phi_S$ then $(\as+\bb)\in \Phi_S$ and $(\as+\bb,\gh)=-n=(\as,\gh)+(\bb,\gh)$, with $(\as,\gh)\in\{ 0,\pm1\}$ since $\as\in\Phi_0$. But $(\as,\gh)=0$ implies $\bb+\gh\in \Phi$ and $(\as,\gh)=1$ implies $(\bb,\gh)=-(n+1)$ hence $\bb=-\gh$ which both contraddict the hypothesis. Hence $(\as,\gh)=-1$ and $\as+\gh\in \Phi$.\\

Finally, if $\as+\bb,\as+\gh \in \Phi$ then
\bea{l}
[[x_\as,x_\bb],x_\gh] + [[x_\gh,x_\as],x_\bb]=\\ -\eab\ega\ebg+\ega\egb\eab= \eab\ega(\egb-\ebg)\label{fjacp}
\eea
By $iv)$ of Proposition \sref{epsprop} plus Proposition \sref{sproots} and \eqref{rema} we have $\ebg \egb = (-1)^{(\bb,\gh)-m_Nn_N(n-1)}$. If $\bb\in\Phi_O$ or $\gh\in \Phi_O$ then  $\ebg \egb = (-1)^{(\bb,\gh)}$ and $(\bb,\gh)\in\{0,1\}$. Suppose $(\bb,\gh)=1$. Then either $\as+\gh\in\Phi_O$ or $\bb\in\Phi_O$ and $(\as+\gh,\bb)=-1$; but also $(\as+\gh,\bb)=(\as,\bb)+(\gh,\bb)=-1+1=0$, a contraddiction. If both $\bb,\gh\in\Phi_S$  then $(\as+\gh,\bb)=-n$ and $(\as,\bb)=-1$. So $(\bb,\gh)=-(n-1)$ and $\ebg \egb = (-1)^{(\bb,\gh)-m_Nn_N(n-1)}= (-1)^{-(n-1)\pm(n-1)}=1$, therefore $\ebg=\egb$ and \eqref{fjacp} is zero.\\
This ends the proof of Proposition \sref{der}. \hfill
\end{proof}

\begin{proposition}\label{t-al}
The product $x,y\to x\jp y$ is symmetric. With respect to this product the element $I\in T$ is the identity  and the elements $\xu , \xd , \xt\in T$ are trace-one idempotents, hence are rank-1 elements of $T$. All $x_\as \in T$ but  $\xu , \xd , \xt$ are nilpotent and trace-0, hence they are also rank-1 elements of $T$. The form $x \to tr(x)$ is a trace form, namely $tr(x,y)$ is bilinear and symmetric in $x,y$ and the associative property $tr(x\jp y,z)=tr(x,y\jp z)$ holds for every $x,y,z \in T$.
\end{proposition}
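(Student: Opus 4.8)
This Proposition bundles five claims, which I would establish in turn: symmetry of $\jp$; that $I$ is the identity; that $\xu,\xd,\xt$ are trace-one idempotents, hence rank-1; that every other $x_\as\in T$ is a trace-$0$ nilpotent, hence rank-1; and that $x\mapsto tr(x)$ is an associative trace form. The first four are short direct computations from \eqref{comrel}; the last is the substantive one. Symmetry is immediate from Proposition \ref{der}, which gives $[[x,I^-],y]=-[[I^-,y],x]=[[y,I^-],x]$, whence $x\jp y=\tfrac12[[x,I^-],y]=\tfrac12[[y,I^-],x]=y\jp x$. For $I$ and the $\xpi$, the key point is that $\rho_i-\rho_j$ is not a root when $i\ne j$ (e.g.\ $\rho_1-\rho_2=2k_N$, $\rho_1-\rho_3=k_1+k_2+k_3+k_N$, $\rho_2-\rho_3=k_1+k_2+k_3-k_N$: none has the shape $\pm k_a\pm k_b$ with $a\ne b$, nor all coordinates $\pm\tfrac12$), so by \eqref{comrel} the only surviving brackets among the $x_{\pm\rho_i}$ are diagonal, giving $[\xpi,I^-]=h_{\rho_i}$ and $[I,I^-]=h_{\rho_1}+h_{\rho_2}+h_{\rho_3}$. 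Since $\rho_1+\rho_2+\rho_3=2k_1-k_2-k_3=\tfrac32(k_1-k_2)+\tfrac12(k_1+k_2-2k_3)$, while every root $\as$ projecting onto $T_{(1,1)}$ obeys $(\as,k_1-k_2)=(\as,k_1+k_2-2k_3)=1$, we get $(\as,\rho_1+\rho_2+\rho_3)=2$ for all such $\as$; hence $I\jp x_\as=\tfrac12(\as,\rho_1+\rho_2+\rho_3)\,x_\as=x_\as$, so $I$ is the identity, and $\xpi\jp\xpi=\tfrac12(\rho_i,\rho_i)\xpi=\xpi$, so the $\xpi$ are idempotents with $tr(\xpi)=1$ by \eqref{trace}. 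Substituting an idempotent $u$ with $tr(u)=1$ into \eqref{sharp} gives $u^\#=u-u-\tfrac12(1-1)I=0$, so the $\xpi$ are rank-1 (hence primitive idempotents).

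For the remaining generators, fix $\as\notin\{\rho_1,\rho_2,\rho_3\}$. Then $[x_\as,I^-]=-\sum_i\ep(\as,-\rho_i)\,x_{\as-\rho_i}$, the sum running over the $i$ with $\as-\rho_i\in\Phi$ (those $\as-\rho_i$ being roots of $\esn$), so $x_\as\jp x_\as=\tfrac12[[x_\as,I^-],x_\as]$ is a linear combination of the $x_{2\as-\rho_i}$. But $2\as-\rho_i$ is never a root: for $\as\in\Phi_O$ it has a coordinate $\pm2$ (or else four nonzero coordinates, two of modulus $2$), and for $\as\in\Phi_S$ it is integral with at least $N-2\ge6$ nonzero coordinates; neither shape occurs in $\Phi$. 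Hence $x_\as\jp x_\as=0$, i.e.\ $x_\as$ is nilpotent; $tr(x_\as)=0$ by \eqref{trace}; and \eqref{sharp} gives $x_\as^\#=0$, so $x_\as$ is rank-1.

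It remains to treat the trace form. Bilinearity of $tr(x,y)=tr(x\jp y)$ is obvious and its symmetry follows from that of $\jp$; the work lies in the associativity $tr(x\jp y,z)=tr(x,y\jp z)$. My plan is to pass to the $3$-graded subalgebra $\estn=\esn\oplus T\oplus\bar T$ (with $\bar T=T_{(-1,-1)}$) and equip it with the symmetric bilinear form $\kappa$ that pairs $x_\as$ with $x_{-\as}$ and has $\kappa(h_i,h_j)=-(\alpha_i,\alpha_j)$; using only \eqref{comrel} and the multiplicativity and shift identities for $\ep$ in Propositions \ref{epsprop} and \ref{remas} (no appeal to a Jacobi identity is needed), one verifies $\kappa$ is invariant: $\kappa([a,b],c)=\kappa(a,[b,c])$. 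Evaluated on $I^-=-(\xub+\xdb+\xtb)$ this gives $tr(x)=-\kappa(x,I^-)$ for $x\in T$, so a single application of invariance yields $tr(x\jp y)=-\tfrac12\kappa([[x,I^-],y],I^-)=-\tfrac12\kappa([x,I^-],[y,I^-])$ — re-proving symmetry and reducing the associativity to $\kappa([x\jp y,I^-],[z,I^-])=\kappa([x,I^-],[y\jp z,I^-])$. Unwinding both sides with invariance turns this into a bracket identity among $y,z,I^-$ in the grade-$0$ part (schematically $[y,[[z,I^-],I^-]]+[[[y,I^-],z],I^-]=0$), which I would then derive from the Jacobi relations actually at our disposal — those of Proposition \ref{der}, with one argument in the span of $\{\xub,\xdb,\xtb\}$ (in particular $I^-$) and the other two in $T$, such as $[w,[z,I^-]]=[z,[w,I^-]]$ for $w,z\in T$ — possibly supplemented by one or two further identities of the same type proved the same way.

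I expect this last step to be the main obstacle. Because $\lms$ — and already $\estn$, and $\esn$ for $n>1$ — is not a Lie algebra, triple brackets cannot be reassociated at will, so every Jacobi relation invoked in the reduction must be kept in the one configuration licensed by Proposition \ref{der} (one entry a conjugate diagonal idempotent, the other two in $T$); this forces a careful ordering of the manipulations, and is exactly where a naive computation fails. A more pedestrian alternative, which sidesteps the bracket bookkeeping at the cost of a longer calculation, is to verify $tr((x_\as\jp x_\bb)\jp x_\gh)=tr(x_\as\jp(x_\bb\jp x_\gh))$ directly on basis elements: both sides vanish unless $\as+\bb+\gh=\rho_i+\rho_j+\rho_k$ for some triple of indices, and the identity then reduces to matching the corresponding products of $\ep$-signs, using properties (i) and (ii) of Proposition \ref{epsprop} together with Proposition \ref{remas}.
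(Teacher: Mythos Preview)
Your treatment of the first four claims --- symmetry via Proposition~\ref{der}, $I$ acting as identity through $[I,I^-]=h_{\rho_1}+h_{\rho_2}+h_{\rho_3}$ and $(\alpha,\rho_1+\rho_2+\rho_3)=2$, the $\xpi$ as trace-one idempotents, and nilpotency of the remaining generators by checking that $2\alpha-\rho_i$ is never a root --- is exactly the paper's argument, with a bit more detail supplied (the paper does not spell out why $\rho_i-\rho_j\notin\Phi$ for $i\ne j$, nor the shape analysis of $2\alpha-\rho_i$).

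Where you diverge is the associativity of the trace. The paper's proof is one sentence: ``The associativity of the trace is proven by direct calculation.'' That is precisely your ``pedestrian alternative'' --- checking $tr((x_\alpha\jp x_\beta)\jp x_\gamma)=tr(x_\alpha\jp(x_\beta\jp x_\gamma))$ on basis elements, where both sides vanish unless $\alpha+\beta+\gamma\in\{\rho_i+\rho_j+\rho_k\}$ and the nonzero cases reduce to sign identities for $\varepsilon$. Your primary route through an invariant form $\kappa$ on $\estn$ is genuinely different and more conceptual: it explains \emph{why} associativity should hold (it is the shadow of $\kappa$-invariance) rather than merely verifying it. The cost, as you correctly flag, is that the final reduction to a bracket identity requires Jacobi-type relations in configurations not covered by Proposition~\ref{der} (e.g.\ one argument in grade $0$, one in $T$, one equal to $I^-$), so you would need to prove those separately by the same case analysis that drives the direct computation anyway. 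In that sense your structural route is illuminating but does not ultimately save work over the paper's brute-force approach; it trades a basis-element check for a comparable amount of Jacobi-fragment verification.
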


\begin{proof} By Proposition \sref{der}:
\be [[x,I^-],y] + [[I^-,y],x] = 0\ \Rightarrow\ [[x,I^-],y] = [[y,I^-],x]
\ \Rightarrow\ x\jp y = y\jp x
\ee
We also have
\be
[\xpi,-\xpjb]=\delta_{ij} \hri \ , \ i,j=1,2,3
\ee
and for any $\as \in T_{(r,s)}$, being $\rho_1+\rho_2+\rho_3 = 2k_1-k_2-k_3$:
\be
I\jp x_\as = \frac12 [[I,I^-],x_\as] = \frac12 [\hru+\hrd+\hrt,x_\as] = \frac12 (\rho_1+\rho_2+\rho_3,\as) x_\as = x_\as
\ee
By linearity this extends to any $x\in T$.\\
Moreover:
\be \xpi\jp \xpi=\frac12 [[\xpi,I^-],\xpi] = \frac12 [h_{\rho_i},\xpi] = \frac12 (\rho_i,\rho_i) \xpi = \xpi
\ee
Obviously $tr(\xu)=tr(\xd)=tr(\xt)=1$.\\
We now show that $x_\as \in T$ is nilpotent if $\as\ne k_1\pm k_N, -k_2-k_3$.\\
Suppose $x_\as\jp x_\as\ne 0$ then either $2 \as - k_1 - k_N$ or $2 \as - k_1 + k_N$ or $2 \as + k_2 + k_3$ is a root. If $\as\in \Phi_O$ then $2 \as - k_1 \mp k_N = k_1 \pm 2 k_i \mp k_N$ is a root if and only if $k_i=k_N$, whereas $2\as + k_2 +k_3$ is not a root. If $\as\in \Phi_S$ then nor $2 \as - k_1 \mp k_N$ nor $2\as + k_2 +k_3$ are roots. Therefore $\as\ne k_1\pm k_N, -k_2-k_3$ implies $x_\as^2=0$ and, obviously, $tr(x_\as)=0$ and $x_\as$ be rank-1.\\
Finally, from the definition of the product $x,y\to x\jp y$ and the definition of the trace it easily follows that $tr(x,y)$ is bilinear. The associativity of the trace is proven by direct calculation.
\end{proof}
%========================================================================================

\section{Future Developments}\label{sec:fd}
{\it There are more things in heaven and earth, Horatio, than are dreamt of in your philosophy.}

``Hamlet'' 1.5.167-8, W. Shakespeare\\

We have presented a periodic infinite chain of finite generalisations of the exceptional structures, including $\eo$, the exceptional Jordan algebra $\mathbf{J}_{\mathbf{3}}^{\mathbb{O}}$ (and pair), and the octonions.
We have demonstrated that the exceptional Jordan algebra $\mathbf{J}_{\mathbf{3}}^{\mathbb{O}}$ is part of an infinite family of finite-dimensional matrix algebras (corresponding to a particular class of Vinberg's cubic T-algebras \cite{vin}). Correspondingly, we have proved that $\eo$ is part of an infinite family of MS algebras, that resemble lattice vertex algebras.

\bigskip

We are currently working on several topics concerning the mathematical structure presented in this paper. Many of them are at an advanced stage, some are completed. In particular we will expand in future papers on the algebra of derivations of the algebras $\lms$ and $T$, the lattice, the modular forms and the theta functions related to EP.\\
There are also several topics that we are planning to develop in the future, strictly related to Quantum Gravity. In particular, a model for interactions based on EP which includes gravity and the expansion of space-time, starting from a singularity. We aim at a new perspective of elementary particle physics at the early stages of the Universe based on the idea that interactions, defined in a purely algebraic way, are the fundamental objects of the theory,
whereas space-time, hence gravity, are derived structures.

%Instrumental to this would be the study of the representations of quantum exceptional groups - in particular \textit{quantum} $\eo$  \cite{quantum-E8} - and of integrable models built on them.\\

An interesting venue of research stemming from EP is to study the higher dimensional
weight vectors of algebras akin to lattice vertex algebras (the original
motivation for Borcherds' definition of vertex algebras \cite{29}), that project to a
``Magic Star'' structure. Such extended ``Magic Star'' algebras make use of an
\textit{asymmetry function} \cite{kac, graaf}, that acts like the cocycle of a lattice vertex
algebra which gives a twisted group ring $\mathbb{C}_{\epsilon }[\Lambda ]$
over an even lattice $\Lambda $. This gives algebras that span higher
dimensional lattices, beyond that of the self-dual $D=8$ lattice of $\eo$,
and allows one to potentially probe the symmetries of the heterotic string and moonshine
\cite{27,28}, as well as to formulate a $T$-algebra generalization of noncommutative geometry.
Specifically, EP gives a novel algebraic method to study even self-dual
lattices, such as the $\eo\oplus \eo$ and Leech lattices, which already
have a well known connection to the Monster vertex algebra and $D=24$
bosonic string compactifications \cite{30, 31, 32}.\\

With an infinite family of new algebras that extend the exceptional Lie
algebras, EP gives a fresh new toolkit for studying emergent spacetime
and quantum gravity, in dimensions beyond those previously explored, using
spectral techniques applied to an infinite class of cubic $T$%
-algebras.

An interaction based approach might also have a Yang-Mills interpretation, as EP hints at higher-dimensional (possibly super) Yang-Mills theories (see \textit{e.g.} \cite{sezgin}), that may exhibit some kind of triality.\\

As the exceptional Jordan algebra $\mathbf{J}_{\mathbf{3}}^{\mathbb{O}}$ is to $\eo$, so is the particular infinite class of cubic T-algebras to the higher EP algebras, when projected along an $\mathbf{a}_{2}$.  Therefore, physically, the class of cubic T-algebras under consideration allows one to consistently generalize exceptional quantum mechanics of Jordan, Wigner and von Neumann \cite{JWVN}, because they go beyond the formally real Jordan algebraic classification of quantum-mechanical self-adjoint operators; in a sense, the classification of quantum-mechanical observables can only probe the T-algebras up to the exceptional Jordan algebra.\\

It is the authors' hope that a synthesis of spectral algebraic geometry and
EP can unify many approaches to unification and quantum gravity, and provide
a new lens for searching a non-perturbative theory of all matter, forces and spacetime.

\section*{Acknowledgments}

The work of PT is supported in part by the \textit{Istituto Nazionale di Fisica Nucleare} grant In. Spec. GE 41.

%   ========================================================================================

%    Bibliographies can be prepared with BibTeX using amsplain,
%    amsalpha, or (for "historical" overviews) natbib style.
\bibliographystyle{amsplain}
%    Insert the bibliography data here.

\end{document}